\documentclass[lettersize,journal]{IEEEtran}
\usepackage{amsmath,amsfonts}
\usepackage{algorithmic}
\usepackage{algorithm}
\usepackage{array}
\usepackage{textcomp}
\usepackage{url}
\usepackage{verbatim}
\usepackage{graphicx}
\usepackage{cite}
\usepackage{hyperref}
\usepackage{colortbl}

\usepackage{mathrsfs}

\usepackage{multicol}
\usepackage{multirow}
\usepackage{cases}

\usepackage{subfigure}
\newtheorem{theorem}{{Theorem}}
\newtheorem{lemma}{{Lemma}}

\newtheorem{remark}{{Remark}}
\newtheorem{corollary}{{Corollary}}

\begin{document}
 
\title{Enhancing AAV-Enabled Secure Communications via Synthetic Aperture Beamforming}

\author{Bin~Qiu,~\IEEEmembership{Member,~IEEE,}
	Wenchi~Cheng,~\IEEEmembership{Senior Member,~IEEE,}
	Hongxiang~He,\\
	and~Jiangzhou~Wang,~\IEEEmembership{Fellow,~IEEE}
     


%
	
\thanks{Bin Qiu, Wenchi Cheng, and Hongxiang He are with the State Key Laboratory of Integrated Services Networks, Xidian University,
		Xian 710071, China (e-mail: qiubin@xidian.edu.cn; wccheng@xidian.edu.cn; hehongxiang@stu.xidian.edu.cn).}

\thanks{Jiangzhou Wang is with the School of Information Science and Engineering, Southeast University, and Purple Mountain Laboratories, Nanjing 211119, China (e-mail: j.z.wang@kent.ac.uk).}
}

\maketitle

\begin{abstract}
In this paper, we consider a synthetic aperture secure beamforming approach for a virtual multiple-input multiple-output (MIMO) broadcast channel in the presence of hybrid wiretapping environments. Our goal is to design the flight node deployment constructed by a single-antenna mobile autonomous aerial vehicle (AAV), corresponding transmission symbol strategy, transmit precoding, and received beamforming to maximize the system channel capacity. Leveraging the synthetic aperture beamforming, we aim to provide spatial gain along a predefined angle in free space while reducing it in others and thus enhance physical layer (PHY) security. To this end, we analyze the expression of the asymptotic channel eigenvalues to optimize the AAV flight node deployment. For the optimal precoding design, an energy-efficient method that minimizes the transmit power consumption is studied based on the given virtual MIMO channel, while meeting the quality of service (QoS) for the base station (BS), leakage tolerance of eavesdroppers (Eves), and per-node power constraints. The power minimization problem is a non-convex program, which is then reformulated as a tractable form after some mathematical manipulations. Moreover, we design the received beamforming by applying the linearly constrained minimum variance (LCMV) method such that the jamming can be effectively suppressed. Numerical results demonstrate the superiority of the proposed method in promoting capacity.
\end{abstract}

\begin{IEEEkeywords}
Physical layer security, synthetic aperture, AAV communications, virtual MIMO, beamforming.
\end{IEEEkeywords}

\section{Introduction}\label{sec:Intro}
\IEEEPARstart{N}{owadays}, various aircraft, such as satellites, airships and autonomous aerial vehicles (AAVs), have significantly expanded the scope of application fields, like power inspection, emergency response, and smart city~\cite{9598918}. AAVs, in particular, have garnered considerable research interest due to the strong maneuverability and prominent deployment flexibility. By integrating AAVs into wireless networks, the air-to-ground (A2G) links exhibit a high probability of line-of-sight (LoS) components, which indicates a promising enhanced system performance over conventional terrestrial cellular networks~\cite{9456851}. Along with the countless applications of AAVs, their links are vulnerable to security threats from malicious attackers due to the inherent broadcast nature of the communication medium. As a result, security has emerged as a pivotal concern in the implementation and operation of both current and future AAV communication systems~\cite{8883127,11244297}.

Against this background, physical layer (PHY) security is a supplement to traditional encryption technology based on information theory and can improve security by utilizing the time-varying characteristics of the wireless medium to degrade or eliminate the wireless signals received by adversaries. In contrast to key-based cryptography applied to higher layers, PHY security protects information transmission without relying on secret keys or complex algorithms. The foundation of PHY security was laid by Wyner's pioneering work~\cite{6772207}, which demonstrated the possibility of secrecy in a single-input single-output (SISO) wiretap channel. Subsequently, this concept was expanded to encompass the Gaussian wiretap channel~\cite{1055892}. 

Up to now, there are several works that have concentrated on enhancing the security performance of AAV-enabled communication networks. In~\cite{10058144}, a robust and secure task transmission and computation scheme was considered for multi-antenna AAV-assisted mobile edge computing networks, where an AAV serves as both mobile edge computing and relay functions. By integrating sensing into AAV networks, the authors in~\cite{10345500} proposed an integrated sensing, navigation, and communication framework to guarantee PHY security in the presence of a mobile eavesdropper (Eve).  An information AAV serving as an aerial base station transmits confidential messages to multiple legitimate ground users and jams an Eve with artificial noise (AN) to facilitate secure communication. Leveraging the real-time and robust characteristics of model predictive control effectively addresses the imprecision in AAV positioning caused by disturbances. In view of this, a problem of navigating an AAV from a specified starting point to a designated endpoint was explored to enhance the system robustness, while considering factors such as communication security for users, power consumption, and AN~\cite{11069265}.

Due to the limitations of single-AAV size, weight, and power, it may fail to provide satisfactory performance, which is specifically required for AAV links like remote communication and high-level security. As a result, it is of interest to develop a multi-AAV cooperation mode that provides extra freedom to achieve higher communication capability~\cite{10439615}. In~\cite{9045989}, the authors investigated a resource allocation policy to enhance fairness in secure communication facilitated by multiple AAVs-enabled communications, where multipurpose AAVs are dispatched to provide PHY security for the legitimate receivers via the assigned subcarriers while idle AAVs are served as jammers for satisfactory secrecy performance provisioning. To fully exploit the benefit of multi-AAV flexibility, a novel AAV swarm-enabled collaboration mechanism was investigated in~\cite{8469055}, where a virtual multiple-antenna entity formed by a group of single-antenna AAVs yields improvement in diversity and communication reliability. Inspired by this work, the authors in~\cite{10536043,10382648,10812989,10759093} explored an AAV swarm-enabled virtual antenna array framework to transmit data via collaborative beamforming mechanism capable of enhancing the signal gain and directivity. Moreover, through a virtual multi-antenna system formed by multiple AAVs, the authors in~\cite{9968197}  proposed a multi-AAV cooperative sensing and transmission scheme with overlapped sensing task allocation.

Most of the existing PHY security studies focused on passive eavesdropping, where the anti-eavesdropping technique evolution was the main research goals. New challenges are posed when intelligent hybrid attackers exploit full-duplex technology~\cite{8610381,10417007,7470273}. Such attackers, also referred to as active Eves, are prone to jeopardizing the privacy of communication via hybrid wiretapping scenarios that perform both eavesdropping information and malicious jamming legitimate channels. The jamming tries to degrade the quality of the signals received by legitimate receivers, forcing the transmitter to increase its power to ensure reliable transmission, which in turn is more conducive to wiretapping, and thus the potential of this advanced technology can be conducted to launch more hazardous attacks~\cite{8674764}. Consequently, this poses new security challenges for AAV-enabled communications, but works on hybrid wiretapping in AAV-enabled communication systems are relatively rare at present.

Despite the practical advantages of AAV-enabled communications, several technical challenges should be addressed to unlock the promising potential for performance improvements. Typically, the energy capacity of AAVs relying on onboard batteries is limited. Hence, one key bottleneck is that the strict power constraints hinder the promotion of high-performance AAV-enabled communications. Energy-efficient AAV-enable secure communications attract significant research interest. The integration of AN consumes part of the transmit power thereby reducing the power efficiency, and thus the AN-aided method is not suitable for the application in AAV scenarios. Given the AAV’s size and weight constraints in practice, it is impossible for an AAV to carry large-scale antenna arrays, thus failing to provide satisfactory spatial degrees of freedom (DoFs)~\cite{9848802}. For multi-AAV cooperation scheme, it is generally required explicit information sharing among multiple AAVs before collaborative transmission, which leads to extra time and power consumption. Moreover, it is very hard to guarantee strict time, frequency, and trajectory synchronization control due to the AAVs' individual heterogeneity.

Motivated by the above observations, in this work, we consider an AAV-enabled secure transmission approach, where a source transmits to a multi-antenna base station (BS) in the presence of a hybrid wiretapping environment. Different from the previous collaborative beamforming works, we investigate the scenario where a virtual non-uniform linear array (NULA) arranged by a single-antenna mobile AAV forms synthetic aperture beamforming, and combines it with corresponding transmission symbol strategy to provide spatial division gain. It is important to note that our proposed synthetic aperture beamforming approach not only overcomes the size, weight, and power constraints of a single AAV, but also avoids the application challenges of  multi-AAV complex collaboration, such as interconnection, multi-object control, and individual heterogeneity. To the best of the authors’ knowledge, such a technique to achieve secure transmission has not been investigated in the literature, yet. For the present problem, our goal is to maximize the system channel capacity by the judicious design of the transmit symbol strategy, AAV flight node positions (virtual array deployment), transmit precoding, and received beamforming to guarantee reliable and secure AAV communications. To facilitate the maximization of the channel capacity, we first explore the formulated problem structure to get some insights into the optimal solution. By leveraging the insights, we found that the original problem can be transformed into two independent problems. For the optimal AAV node deployment, we derive analytical expressions for the eigenvalues of virtual multiple-input multiple-output (MIMO) channels. Through this derivation, we demonstrate that the asymptotically optimal AAV node deployment is associated with the Fekete-point distribution. Afterward, an energy-efficient precoder is devised using the given channel matrix to minimize the transmit power while providing quality of service (QoS) guarantee, satisfying leakage tolerance, and meeting per-node power constraints. For tractability, we reformulate the problem by converting the non-convex constraints into linear matrix inequality (LMI) terms, and then the resulting problem is cast as a semi-definite program (SDP). Next, the received beamforming is resorted to the linearly constrained minimum variance (LCMV) method to eliminate jamming. Last, we extend our proposed scheme to more practical cases, such as high-dimensional virtual array and robust adaptive beamforming.

The rest of this paper is organized as follows. Section~\ref{sec:Model} describes the system model and problem formulation. In Section~\ref{sec:Opti}, some insights into the considered problem are given; then, the optimization problem is solved to obtain the optimal AAV node deployment, transmit precoding, and received beamforming, followed by some extensions of more realistic scenarios in Section~\ref{sec:Exte}.  In Section~\ref{sec:Simu}, numerical results are provided to show the superiority of our proposed scheme, and Section~\ref{sec:Conc} concludes the paper. Some details regarding the derivations are provided in the Appendices.

\textit{Notations}: Throughout this paper, bold capital and lower-case letters denote the matrices and vectors, respectively. We represent the expectation and the trace operators as $\mathbb{E}\{\cdot\}$ and Tr$(\cdot)$. Here, the superscripts $( \cdot )^ {-1} $, $( \cdot )^T$, and $( \cdot )^H$ represent inverse, transpose, and Hermitian operations, respectively. The denotations of Euclidean norm and absolute value are given by $\left\|  \cdot  \right\|_2$ and $\left| \cdot \right|$, respectively. We use ${[ \cdot ]_{i}}$ and ${[ \cdot ]_{i,j}}$ to indicate the $i$th element of the vector and the entry in the $i$th row and the $j$th column of the matrix, respectively. ${\rm{diag}}(\cdot)$ denotes a diagonal matrix with the elements of a vector on the main diagonal. ${\mathfrak{Re}}\{\cdot\}$ and ${\mathfrak{Im}}\{\cdot\}$ denote the real part and imaginary part of the corresponding arguments, respectively. We use ${{\mathbf{I}}_N}$ to denote an $N$-dimensional identity matrix. $\mathbb{R}$, $\mathbb{C}$, and $\mathbb{H}$ represent the real, complex, and Hermitian spaces, respectively.

\section{System Model and Problem Formulation}\label{sec:Model}
In this section, after describing the AAV-enabled communication system model, we introduce the problem statement of interest.
\subsection{Channel Model}

\begin{figure}[!t]
	\centering
	\includegraphics[width=0.75\columnwidth]{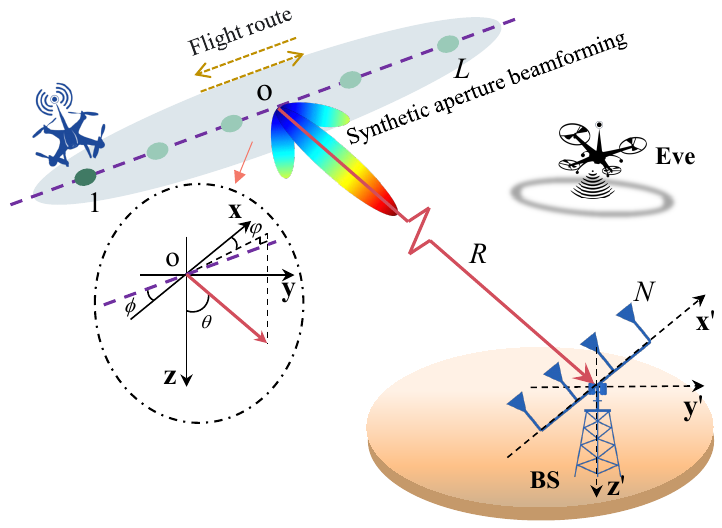}
	\caption{AAV-enabled synthetic aperture secure transmission systems.}
	\label{fig_1}
\end{figure}

Our system model, as shown in Fig.~\ref{fig_1}, considers an A2G communication system, where a rotary-wing AAV equipped with a single-antenna is deployed as a transmitter and provides wireless service to a remote multi-antenna BS in the presence of several passive Eves and $Q$ adversaries with the dual capability of eavesdropping as well as jamming any ongoing transmission to degrade the secrecy performance. A virtual array constructed by the mobile AAV along a predetermined route to form synthetic aperture beamforming, which can provide spatial gain towards the BS. More specifically, the AAV flies back and forth along a straight route at variable speeds to form $L$ flight nodes, thereby constructing a virtual NULA with $L$-element. It is assumed that the length of the one-way flight route is $D$, which represents the aperture size of the virtual transmit array. The BS employs an isotropic $N$-element uniform linear array (ULA), and it together with the virtual transmit array forms a virtual MIMO transmission system~\cite{8411465}. 

Without loss of generality, a three-dimensional (3D) Cartesian coordinate is employed to denote the precise positions of the virtual elements deployed by the AAV flight nodes. Assume that the virtual transmit NULA lies in the $x$–$y$ plane, which is centered at the origin and is parallel to the ground plane. The $x$-axis is aligned with the BS's ULA, and the $z$-axis is selected to point towards the ground. For simplicity, it is assumed that the AAV maintains a constant altitude, whose flight altitude is sufficiently high to establish a point-to-point far-field LoS link with the BS.$\footnote{Based on field measurements, for an AAV with a flight altitude of 100 meters and a cell with a radius of 600 meters, the link is guaranteed to be LoS channel~\cite{8974403}. Moreover, the AAV's flight altitude can be adjusted according to terrain type and cell size so that the LoS probability of the A2G channel approaches one.}$ The range between the centers of transmit-receive arrays is $R$. Let $\varphi $ and $\theta $ denote the corresponding azimuth and elevation angles of departure from the virtual array toward the BS, respectively. Denote by $\phi$ the rotation offset which refers to the angle between the virtual transmit NULA and the $x$-axis. Driven by the secure transmission with low probability of interception (LPI), we introduce random rotation offsets to form dynamic stochastic channels, which randomly scramble the received signals of Eves. For the convenience of describing the AAV node deployment, we use $\{\delta_l\}_{l \in {\cal L}} \in [-1,1]$ to denote the normalized spacing on the transmit virtual array relative to the center with ${\cal L} \buildrel \Delta \over =[1,2,...,L]$. Based on the basic geometric manipulations, a closed-form relationship between the angles and corresponding coordinates of the $l$th AAV node, denoted by $\left({x_{A,l}},y_{A,l},z_{A,l}\right)$, can be derived as

\begin{align}
	\left\{ \begin{array}{l}
		{x_{A,l}} = \frac{{D\delta_l \cos \phi }}{2},\\
		{y_{A,l}} = \frac{{D\delta_l \sin \phi }}{2},\\
		{z_{A,l}} = 0.
	\end{array} \right.
	\label{eq1}\end{align}
Likewise, the specific coordinates of the $n$th element at the ground BS, $\forall n\in {\cal N} \buildrel \Delta \over =[1,2,...,N]$, denoted by $\left(x_{G,n},y_{G,n},z_{G,n}\right)$, is computed as
\begin{align}
	\left\{ \begin{array}{l}
		{x_{G,n}} = \frac{{2n - 1 - N}}{2}d + R\sin \theta \cos \varphi,\\
		{y_{G,n}} = R\sin \theta \sin \varphi,\\
		{z_{G,n}} = R\cos \theta,
	\end{array} \right.
	\label{eq2}\end{align}
where $d=c/2f_c$ denotes the inter-element spacing of ULA at the ground BS to avoid aliasing effects with $f_c$ and $c$ denoting the carrier frequency and the speed of light. 

In accordance with ray tracing principles~\cite{4155681}, the channel coefficient between each pair of transmit and receive antennas  is related to the radio wave propagation range, whose value is determined by 
\begin{align}
	{h_{n,l}} =\rho ({\tau_{n,l}}) {e^{j{2\pi f_c}\left(t - {\frac{{{\tau_{n,l}}}}{c}}\right)}}, \quad \forall n\in {\cal N},\forall l\in {\cal L}, 
	\label{eq3}\end{align}
where $\rho ({\tau_{n,l}})={c}/{{4\pi f_c {\tau_{n,l}}}}$ denotes the signal attenuation factor with respect to the transmission range, and ${\tau_{n,l}}$ indicates the radio wave propagation range between the $l$th transmit AAV node to the $n$th received element, which satisfies$\footnote{By employing a first-order Maclaurin series expansion~\cite{Mathematics}, i.e. ${(1 + x)^{1/2}} \approx 1 + x/2$, the approximation is proper.}$
\begin{eqnarray}
	\begin{aligned}[b]
	{\tau _{n,l}}&\! =\! \sqrt {{{({x_{G,n}} \!-\! {x_{A,l}})}^2} \!+ \!{{({y_{G,n}} \!-\! {y_{A,l}})}^2} \!+\! {{({z_{G,n}}\! -\! {z_{A,l}})}^2}} \\&
    \approx  R  \!+ \! \frac{{{{(2n \! -  \!1 \! - \! N)}^2}{d^2}}}{{8R}} \! + \! \frac{{(2n \! - \! 1 - \! N)d\sin \theta \cos \varphi }}{2} \\&
      \kern 12pt    - \frac{{2n - 1 - N}}{{4R}}dD{\delta_l}\cos \phi - \frac{{D{\delta_l}\sin \theta \cos \varphi \cos \phi }}{2} \\&
      \kern 12pt    - \frac{{D{\delta_l}\sin \theta \sin \varphi \sin \phi }}{2} + \frac{{{{(D{\delta_l})}^2}}}{{8R}}.
	\end{aligned}
	\label{eq4}\end{eqnarray}
Since it is assumed to be far-field transmission, the difference in attenuation among elements can be ignored. To elucidate the synthetic aperture beamforming characteristics, we have made some assumptions to simplify the formulation, e.g., perfectly synchronized in time and frequency between the transmit-receive end. We assume that the AAV and BS devices are equipped with global positioning system (GPS) modules to obtain information regarding their own locations. During the transmission, the BS sends acknowledgment packets to inform the AAV of successful reception of information packets. Therefore, it is assumed perfect knowledge of the AAV-to-BS channel matrix during the whole transmission period.$\footnote{Actually, AAV is impaired by these unavoidable uncertain, such as body jitter, and wind disturbances. As an independent subject, robust synthesis schemes of synthetic aperture beamforming is a crucial issue that drives practical applications. Due to space constraints, it is out of our interest in this paper.}$  Inserting \eqref{eq4} into \eqref{eq3}, the virtual MIMO channel matrix is then converted as
\begin{eqnarray}
	\begin{aligned}[b]
	{\bf{H}}  = \rho ({R}){{\bf{B}}_G}{\bf{\tilde H}}{{\bf{B}}_A},
	\end{aligned}
\label{eq5}\end{eqnarray}
where ${{\bf{B}}_A} = {\rm{diag}}\left(\{{b_{A,l}}\}_{l \in {\cal L}}\right)\in \mathbb{C}^{L \times L}$ and ${{\bf{B}}_G} = {\rm{diag}}\left(\{{b_{G,n}}\}_{n \in {\cal N}}\right)\in \mathbb{C}^{N \times N}$ with the diagonal entries satisfying
\begin{align}
	&{b_{A,l}} = {e^{-j\frac{{2\pi f_c}}{c}\left[ {\frac{{{{(D{\delta_l})}^2}}}{{8R}} - \frac{{D{\delta_l}\sin \theta \cos \varphi \cos \phi }}{2} - \frac{{D{\delta_l}\sin \theta \sin \varphi \sin \phi }}{2}}\right]}},\label{eq6}\\&
	{b_{G,n}} = {e^{-j\frac{{2\pi f_c}}{c}\left[\frac{{{{(2n - 1 - N)}^2}{d^2}}}{{8R}} + \frac{{(2n - 1 - N)d\sin \theta \cos \varphi }}{2}\right]}},\label{eq7}
\end{align}
and ${\bf{\tilde H}} = \{ {\tilde h_{n,l}}\}_{n \in {\cal N}, l \in {\cal L}}\in \mathbb{C}^{N \times L}$ whose entities are given by
\begin{eqnarray}
	\begin{aligned}[b]
		{\tilde h_{n,l}} = {e^{j\omega  \frac{{(2n - 1 - N)}}{N}{\delta_l}\cos \phi }},\quad \forall n\in {\cal N},\forall l\in {\cal L},
	\end{aligned}
	\label{eq8}\end{eqnarray}
with $\omega  = \frac{\pi f_c}{c} \cdot \frac{DNd}{2R}$. Notice that we have assumed far-field model for AAV-enabled communications, and thus the communication distance is sufficiently large compared to the product of the antenna aperture sizes of the transmit and receive arrays, satisfying $R \gg DNd$, which will provide us with some insights into the asymptotic channel behavior as $\omega$ is closer to zero. This facilitates later analysis of the considered problem features.

\subsection{Symbol Transmission Strategy}
\begin{figure}[!t]
	\centering
	\includegraphics[width=0.75\columnwidth]{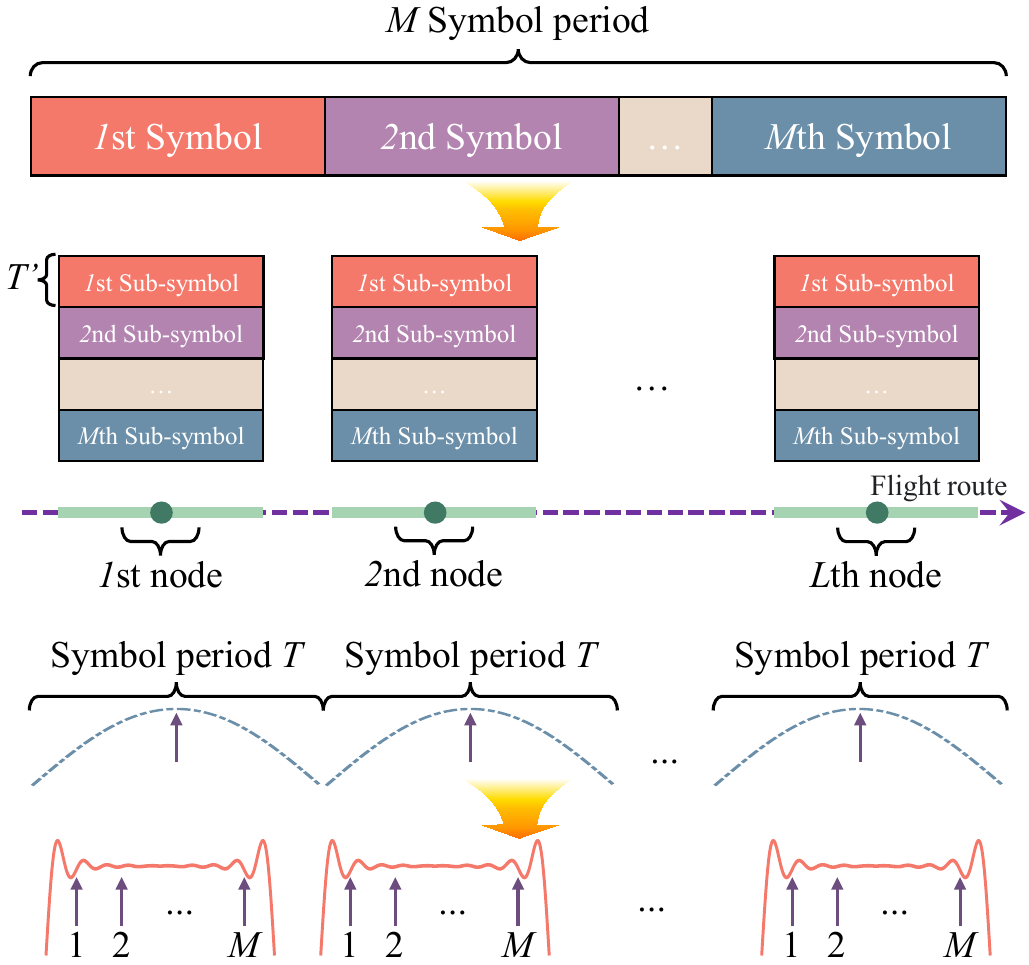}
	\caption{Illustration of the synthetic aperture transmission symbol strategy.}
	\label{fig_2}
\end{figure}

To provide spatial diversity gain, a corresponding symbol transmission strategy should be studied to match the virtual array constructed by a mobile AAV. In particular, it is supposed that an AAV tries to transmit a serial sequence to the ground BS, as illustrated in Fig.~\ref{fig_2}, with each symbol period being $T$. The flight time between two nodes is set as a fixed interval $\Delta_T $. Divide the $M$ serial transmit symbols into one group as a sub-symbol block, denoted by $\{{\hat x_m}\}_{m \in {\cal M}}$, with each sub-symbol period being $T'<T$, i.e., ${x_m} \to {\hat x_m}, \forall m \in {\cal M} \buildrel \Delta \over =[1,2,...,M]$, and repeat the transmission of this one sub-symbol block at each node so that the same data stream can be combined at the receiver to get spatial diversity gain. To implement more effectively, let us assume that the AAV flies continuously along its flight route, and a certain position around one node is regarded as its node region, which is a reasonable assumption since the AAV speed is much less than the sub-symbol rate, i.e., $T' \ll \Delta_T$.$\footnote{Theoretically, we can precisely control the AAV to hover at each node position until a sub-symbol block is launched. In practical applications, there exists a trade-off between performance and implementation cost.}$ During the $l$th node region, the corresponding transmit signal can be expressed as
\begin{eqnarray}
	\begin{aligned}[b]
	{s_{m,l}} = {u_l}{\hat x_m},\quad \forall m\in {\cal M},\forall l\in {\cal L},
	\end{aligned}
\label{eq9}\end{eqnarray}
where ${u_l}$ is the transmit precoder when AAV falls in the $l$th node region, and ${\hat x_m}$ is the $m$th transmit sub-symbol. Let us stack all transmit precoders as a precoding vector ${\bf{u}}$. After flying through all $L$ AAV nodes, the $m$th transmit signal vector is given by
\begin{eqnarray}
	\begin{aligned}[b]
	{{\bf{s}}_m} = {\bf{u}}{\hat x_m},\quad \forall m\in {\cal M}.
	\end{aligned}
\label{eq10}\end{eqnarray}
\begin{remark} \label{remark01}
Through the AAV node deployment optimization results in the next section, we know that the AAV flight nodes construct a virtual transmit NULA. Assuming a fixed flight time interval between two nodes, the sub-symbol block is repeatedly transmitted at the corresponding uniform intervals. Alternatively, the AAV flies at a constant speed and repeatedly transmits sub-symbol blocks upon arrival at the designated node positions.
\end{remark}

\subsection{Received Signal Model}
Define $\left({x_{E,q}},y_{E,q},z_{E,q}\right)$ as the coordinates of the $q$th Eve, $\forall q\in {\cal Q}$. By the basic geometric manipulations, the corresponding angle of arrival ${\vartheta _{J,q}}$ from the $q$th Eve to the ground BS can be computed, and we further obtain the jamming steering vector as
\begin{eqnarray}
	\begin{aligned}[b]
	 {\bf{a}}({\vartheta _{J,q}}) = {\left[ {{e^{j\frac{{2\pi {f_c}}}{c}{{{\zeta }}_{J,q}}(1)}},...,{e^{j\frac{{2\pi {f_c}}}{c}{{{\zeta }}_{J,q}}(N)}}} \right]^H},
	\end{aligned}
\label{eq11}\end{eqnarray}
where ${{{\zeta }}_{J,q}}(n) = [n - (N + 1)]d\sin {\vartheta _{J,q}}/2$, $\forall n\in {\cal N}$, denotes the phase term. For the sake of notation simplification, ${\bf{a}}_{J,q}$ is used to denote the associated channel vector, i.e., ${\bf{a}}_{J,q}\buildrel \Delta \over = {\bf{a}}({\vartheta _{J,q}})$. Then, the jamming signal from the $q$th Eve can be expressed as 
\begin{eqnarray}
	\begin{aligned}[b]
		{{\bf{s}}_{J,q}} = \sqrt {{P_q}} {{\bf{a}}_{J,q}}{x_J},\quad \forall q\in {\cal Q},
	\end{aligned}
	\label{eq12}\end{eqnarray}
where ${x_J}$ denotes the random signals with $\mathbb{E}\{|{x_J}|^2\}=1$, and $\sqrt {{P_q}}$ is the $q$th jamming power arriving at the BS.

Hereby, the virtual MIMO transmission for the $m$th transmit signal vector at the BS is modeled as
\begin{eqnarray}
	\begin{aligned}[b]
	{{\bf{y}}_{G,m}} = {\bf{H}}{{\bf{s}}_m} +  {\sum\limits_{q \in {\cal Q}} {{\bf{s}}_{J,q}}} + {{\bf{n}}_G},\quad \forall m\in {\cal M},
	\end{aligned}
\label{eq13}\end{eqnarray}
where ${{\bf{n}}_G}\in \mathbb{C}^{N \times 1}$ represents the complex additive white Gaussian noise (AWGN) satisfying ${\bf{n}}_G\sim{\mathcal{CN}}({{\boldsymbol{0}}},\sigma_G^2{\bf{I}}_N)$. 
\begin{remark} \label{remark02}
This key observation in equation \eqref{eq13} is that the formulation structurally resembles a multi-antenna transmission system, which implies that our design obtains the multi-antenna gain through a single-antenna mobile AAV’s sequential operation across different sub-symbol periods. 
\end{remark}

 Performing received beamforming processing, the BS obtains the signals transmitted from $l$th AAV node as ${{\tilde y}_{G,m,l}} = {{\bf{w}}^H}{\left[ {\bf{H}} \right]_{:,l}}{s_{m,l}},\forall l\in {\cal L}$. After storing all $L$ AAV node transmit signals, we combine the data stream as
\begin{eqnarray}
	\begin{aligned}[b]
	{{\tilde y}_{G,m}}& = {{\bf{w}}^H}{{\bf{y}}_{G,m}}\\&
	= \underbrace {{{\bf{w}}^H}{\bf{Hu}}{\hat x_m}}_{{\text{Sub-symbol}}} + \underbrace {\sum\limits_{q \in {\cal Q}} {{{\bf{w}}^H}} {{\bf{s}}_{J,q}}}_{{\rm{Jamming}}} + \underbrace {{{\bf{w}}^H}{{\bf{n}}_G}}_{{\rm{Noise}}}, \quad \forall m\in {\cal M},
	\end{aligned}
\label{eq14}\end{eqnarray}
where ${{\bf{w}}}\in \mathbb{C}^{N \times 1}$ denotes the received beamformer complex weight coefficients (called a beamvector), which is designed for the interference suppression. It can be seen that the received signals consist of all $m$th sub-symbol, jamming signals, and thermal noise.

The ability to operate in a full-duplex mode enables the Eves to concurrently execute both attacks. Due to the emission, Eves can be discovered. According to the coordinates of the $q$th Eve, we can calculate the azimuth angle of departure from the virtual transmit array to the $q$th Eve, denoted by ${\vartheta _{E,q}}$. However, the locations of the Eves are uncertain for the uncontrollable factors, e.g., measurement error and irregular movement. The design should address eavesdropping vulnerability under imperfect Eve channel information. Hence, a robust secure beamforming strategy is proposed by leveraging a moment-based random model where estimates of Eve's first and second order statistics are acquirable~\cite{6781609}. Likewise, we define the actual channel vector from virtual transmit array to the $q$th Eve as ${\bf{h}}_{E,q}\buildrel \Delta \over = {\bf{h}}({\vartheta _{E,q}},R_{E,q})$, which is given by
\begin{align}
	{{\bf{h}}_{E,q}} =\rho (R_{E,q}){\left[{e^{ - j\frac{{2\pi {f_c}}}{c}{{{\zeta }}_{E,q}}(1)}},...,{e^{ - j\frac{{2\pi {f_c}}}{c}{{{\zeta }}_{E,q}}(L)}}\right]^H},
	\label{eq15}\end{align}
with ${{{\zeta }}_{E,q}}(l) = {\delta _l}D\sin {\vartheta _{E,q}}$, $\forall l\in {\cal L}$, being the phase term. The deterministic model with uncertainty is given by
\begin{eqnarray}
	\begin{aligned}[b]
	{{\bf{h}}_{E,q}} = {{\bf{\tilde h}}_{E,q}} + \Delta {{\bf{h}}_{E,q}}, \quad \forall q\in {\cal Q},
	\end{aligned}
\label{eq16}\end{eqnarray}
where ${{\bf{\tilde h}}_{E,q}}$ denotes the channel estimate of the $q$th Eve, and $\Delta {{\bf{h}}_{E,q}}$ is the uncertainty. 

The uncertainty associated with different Eves is assumed to be independent and have equal variance. Let ${{\cal U} _q}$ represent the set of all possible channel uncertainties for the $q$th Eve, i.e.,
\begin{align}
	{{\cal U} _q}  = \left\{ {\Delta {{\bf{h}}_{E,q}}{\in \mathbb{C}^{L \times 1}}: \Delta {\bf{h}}_{E,q}^H\Delta {{\bf{h}}_{E,q}} \!\le\! \epsilon _{E,q}^2} \right\},\quad \forall q\in {\cal Q},
	\label{eq17}\end{align}
where $\epsilon_{E,q}>0$ is the extent of the uncertainty radius. Assuming a worst-case eavesdropping scenario, the jamming can be null via collusive wiretapping~\cite{8610381}. Consequently, the received signal of the $q$th Eve is then denoted by
\begin{eqnarray}
	\begin{aligned}[b]
		{y_{E,q}} = {\bf{h}}_{E,q}^H{{\bf{s}}_m} + {n_{E,q}}, \quad \forall q\in {\cal Q},\forall m\in {\cal M},
	\end{aligned}
	\label{eq18}\end{eqnarray}
where ${{{n}}_{E,q}}$ denotes the complex AWGN of the $q$th Eve satisfying ${n_{E,q}}\sim{\mathcal{CN}}({{{0}}},\sigma _{E,q}^2)$.

\subsection{Problem Statement}
Assume that the virtual MIMO channel supports $K (K \le L)$ parallel data stream transmission at most. According to the information-theoretic principles of MIMO systems~\cite[Ch. 7.1.2]{tse2005}, the best way is to carry signals along the primary $K$ eigenmodes of the channel. It is widely acknowledged that Claude Shannon introduced information theory to define the boundaries of reliable communication~\cite{4460100604}. Aligned with this pursuit, our objective is to establish a reliable transfer service for the A2G link under jamming conditions, while concurrently safeguarding information against eavesdropping. The virtual array deployment refers to the geometric arrangement of AAV flight nodes, which plays an essential role in the beamforming capabilities, directivity, and gain. Viewing this fact, we integrate the AAV node deployment, transmit precoding, and received beamforming, yielding a novel synthetic aperture beamforming transmission strategy to improve the transmission performance. Explicitly, the original problem (OP) for maximizing channel capacity can be formulated as
\begin{align}
	\text{(OP)}: \quad \mathop {\max }\limits_{\{{\boldsymbol{\delta}},{\bf{u}},{\bf{w}}\} } ~ C = \sum\limits_{k \in{\cal K}} {{{\log }_2}} \left( {1 + \frac{\gamma }{K}{\lambda_k}} \right),
	\label{eq19}\end{align}	
where ${\lambda_k}$ indicates the $k$th largest eigenvalue of the channel gain matrix ${\bf{G}} \in \mathbb{H}^L$, which is defined as ${\bf{G}} \buildrel \Delta \over = {\bf{H}}^H{{\bf{H}}}$,$\footnote{Here we suppose $N \ge L$. Or else, the channel gain matrix is ${\bf{G}} \buildrel \Delta \over = {\bf{H}}{{\bf{H}}^H}$ when $L > N$.}$ ${\cal K} \buildrel \Delta \over = [1,2,...,K]$, $\boldsymbol{\delta }\buildrel \Delta \over=[\delta_1,\delta_2,...,\delta_L]^T \in \mathbb{R}^{L \times 1}$ is normalized spacing vector of the virtual transmit NULA, and ${\gamma}$ represents the prescribed received signal-to-noise ratio (SNR) requirements for providing QoS assurance. 

\section{Optimal Solution of The Problem}\label{sec:Opti}
In this section, we develop an algorithm that finds the optimal solution for the channel capacity maximization problem by utilizing the unique property of the channel.

\subsection{Some Insights into the Problem}
Before proceeding any further, we first give some insights into the considered optimization problem. One can easily verify that the logarithmic function is monotonic. Therefore, we derive a simpler form of the objective in (OP) without affecting optimality as
\begin{eqnarray}
	\begin{aligned}[b]
		C\propto   {\log _2}\left( {{\gamma }} \right) + {\log _2}\left( {\prod\limits_{k \in {\cal K}} {{\lambda_k}} } \right).
	\end{aligned}
\label{eq20}\end{eqnarray}
Clearly, the first term in \eqref{eq20} is related to the average received SNR, which is independent of the antenna deployment parameters; and the second term involves the eigenvalues of the channel gain matrix, which is determining by the AAV node deployment. In view of this, we first seek to maximize the product of the eigenvalues of the channel gain matrix by optimizing the AAV node deployment; next, with given virtual MIMO channel matrix, we design the precoding and beamforming to meet the prescribed received SNR requirements  for confronting the hybrid Eves. Without loss of generality, signal paths exhibit high correlation in pure LoS-MIMO conditions, which leads to a lower rank of transmission channel matrix. Its core principle of MIMO system for improving capacity lies in leveraging spatial DoFs and the geometric design of antenna arrays. Hence, the pure LoS-MIMO channel matrix becomes high rank by meticulous design of the AAV node deployment to explore the synthetic aperture array’s potentials.

As previously analyzed, the \text{(OP)} boils down to optimizing AAV node deployment, whereby the product of eigenvalues of the channel gain matrix is maximized, i.e.,
\begin{align}
	\text{(P1)}: \quad \mathop {\max }\limits_{ \boldsymbol{\delta }} ~ \prod\limits_{k \in {\cal K}} {{\lambda _k}}.
	\label{eq21}\end{align}
Notice that the singular values of ${\bf{H}}$ are equivalent to those of ${\bf{\tilde H}}$ since both ${{\bf{B}}_G}$ and ${{\bf{B}}_A}$ are unitary by definition. In other words, replacing ${\bf{G}}$ with ${\bf{\tilde G}}\buildrel \Delta \over = {\bf{\tilde H}}^H{{\bf{\tilde H}}}$ offers a more straightforward way to derive the eigenvalues.

To facilitate the design of transmit precoding and receive beamforming, we apply the singular value decomposition (SVD) on the channel matrix ${\bf{\tilde H}}$ to precise control of received SNR, which is given by
\begin{eqnarray}
\begin{aligned}[b]
{\bf{\tilde H}} = \left[ {\begin{array}{*{20}{c}}
		{{{\bf{U}}_1}}&{{{\bf{U}}_0}}
\end{array}} \right]\left[ {\begin{array}{*{20}{c}}
		{\bf{D}}&{\bf{0}}\\
		{\bf{0}}&{\bf{0}}
\end{array}} \right]{\left[ {\begin{array}{*{20}{c}}
			{{{\bf{V}}_1}}&{{{\bf{V}}_0}}
	\end{array}} \right]^H},
	\end{aligned}
\label{eq22}\end{eqnarray}
where ${\bf{D}} = {\rm{diag}}\left(\{\sqrt {\boldsymbol{\lambda }}\}\right)\in \mathbb{R}^{K \times K}$ is a diagonal matrix with nonzero singular values ${\boldsymbol{\lambda }}\buildrel \Delta \over=\left[\lambda_1,\lambda_2,...,\lambda_K\right]^T$, ${{\bf{U}}_1}$ and ${{\bf{V}}_1}$ are left and right singular vectors corresponding to nonzero singular values. The transmit and receive array response matrix (also known as steering matrix or spatial signature matrix) are, respectively, defined as
\begin{align}
	\left\{ \begin{array}{l}
		{{\bf{A}}_A} \buildrel \Delta \over= \rho ({R}){{\bf{V}}_1^H}{{\bf{B}}_A},\\
		{{\bf{A}}_G} \buildrel \Delta \over= {{\bf{B}}_G}{{\bf{U}}_1}.
	\end{array} \right.
	\label{eq23}\end{align}

As mentioned earlier, energy-efficient transmission design is of utmost importance for AAV-enabled communication systems since AAVs are typically onboard battery-powered. Toward this end, we develop a precoding strategy to minimize transmit power, subject to constraints on the received SNR guarantee, leakage tolerance, and per-node transmit power as
\begin{align}
	\text{(P2)}: \quad &\mathop {\min }\limits_{{\bf{u}}} ~ \left\| {\bf{u}} \right\|_2^2 \tag{24a} \label{eq24a}\\&
	{\rm{s.t.}}  ~  {{\bf{A}}_A}{\bf{u}} \ge  {\boldsymbol{\xi }}, \tag{24b} \label{eq24b}\\&
	\kern 14pt	 \mathop {\max }\limits_{\Delta {{\bf{h}}_{E,q}} \in {{\cal U} _q}} {\rm{SNR}}_{E,q} \le {\Gamma _{E}}, \quad \forall q \in \mathcal{Q},\tag{24c} \label{eq24c}\\&
	\kern 15pt	{\left[{\bf{u}}{{\bf{u}}^H}\right]_{l,l}} \le {P_{\rm max}}, \quad \forall l \in {\cal L}, \tag{24d} \label{eq24d}
\end{align}
where ${\boldsymbol{\xi }} = \sqrt {{\gamma  }{{\sigma }_G^2}/{{\boldsymbol{\lambda }}K^2}}$ denotes the target received amplitude for the purpose of meeting desired received SNR requirements. The objective in \eqref{eq24a} is try to minimize the transmit power to against passive Eves since they usually keep radio silent, so any information cannot obtained. Based on PHY security transmission theory, energy leakage is one of the major reasons of deteriorating the secrecy performance, and thus security-oriented beamforming contributes to secrecy performance improvement. Moreover, less transmit power not only can save the transmit power consumption, which is able to provide energy-efficient transmission, but can reduce the message leakage, which can achieve low probability of detection (LPD) secure transmission. The constraint in \eqref{eq24b} is to protect the AAV toward the BS link. Our design approach adheres to providing QoS guarantees for the ground BS. The received SNR serves as an effective metric for QoS, as it is a crucial determinant of the maximum achievable rate and the probability of error. ${\rm{SNR}}_{E,q}={| {{\bf{h}}_{E,q}^H{\bf{u}}}|_2^2}/{\sigma _E^2}$, and ${\Gamma _{E}}$ denotes the maximum SNR tolerance for successfully wiretapping at Eves. The constraint in \eqref{eq24c} is imposed such that the maximum received SNRs of the active Eves remain below the tolerable threshold for given uncertainty sets. In practice, we set $\gamma \gg {\Gamma _{E}}$ to ensure secure communication. ${P_{\rm max}}$ is the maximum transmit power of the AAV. We should mention that an AAV is equipped with its own power amplifier, whose operation needs to constrain the AAV's peak power within its linearity~\cite{8089411}. The constraint in \eqref{eq24d} sets this physical bound of the maximum transmit power at each node radiated by the AAV. Moreover, it can be found that the eigenvalue maximization of the channel gain matrix design in \text{(P1)} not only achieve the channel capacity maximization, but it is related to the target received amplitude, which can further reduce the transmit power consumption.

We should mention that transmit power minimization formulation effectively achieve LPD secure transmission. Besides, leveraging the high flexibility of AAVs, some LPI ways can be also applied to dynamically disrupt the signal received by Eves, such as random rotation offsets and dynamic aperture sizes. The integrated use of LPI (low leakage power) and LPD (dynamic transmission strategy) transmission scheme can effectively thwart the behavior of individual Eves and their collaboration, thereby enhancing secrecy performance.

\begin{remark} \label{remark03}
The transmit power minimization problem formulation under QoS guarantee and the max-min received SNR problem subject to meeting a bound on the transmit power are equivalent up to scaling in the case that all the received SNRs are equal~\cite{1634819}. That is, our transmit power minimization design is consistent with the channel capacity maximization.
\end{remark}

On the other hand, a received beamforming optimization design approach of jamming suppression is built up based on LCMV method~\cite{van2002optimum}, i.e., 
\begin{align}
	\text{(P3)}: \quad &\mathop {\min }\limits_{\bf{w}} ~ {{\bf{w}}^H}{\bf{R}}_y\bf{w}\tag{25a} \label{eq25a}\\&
	{\rm{s.t.}} ~ {{\bf{w}}^H}{{\bf{A}}_G} = {\bf{1}}_{K \times 1}^T, \tag{25b} \label{eq25b}
\end{align}
indicating that minimize the jamming-plus-noise power while maintaining a distortionless response in the desired data streams.

\subsection{Asymptotic Optimality of AAV Node Deployment}

Now, our goal is to develop an algorithm to solve \text{(P1)}. Prior to design of the AAV node deployment, we analyze that the asymptotic behavior of the channel matrix is tractable, which is presented in the following theorem.
\begin{theorem} \label{theorem01}
	When $\omega \to 0$, the $k$th largest eigenvalue of the channel gain matrix is asymptotically computed as$\footnote{Mathematically, $\omega $ is small also corresponds to the case of $\phi \to 0$,	where the channel model conforms to the asymptotic characterization.}$
\begin{align}
	\setcounter{equation}{25}
	{\lambda _k} \approx {\left[ {\frac{{r_{G,k}}{r_{A,k}}}{(k - 1)!}}\right]^2}{(\omega \cos \phi )^{2(k - 1)}},\quad\forall k \in {\cal K}, 
\label{eq26}\end{align}
where ${r_{G,k}}$ and ${r_{A,k}}$ represent the $k$th diagonal elements of the upper triangular matrices ${\bf{R}}_G$ and ${\bf{R}}_A$ given in~\eqref{eq74}.	
\end{theorem}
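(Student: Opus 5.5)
Since $\mathbf{B}_G$ and $\mathbf{B}_A$ are unitary diagonal matrices, the nonzero eigenvalues of $\mathbf{G}$ equal those of $\tilde{\mathbf{G}}=\tilde{\mathbf{H}}^H\tilde{\mathbf{H}}$ up to the common factor $\rho(R)^2$, which is irrelevant to the asymptotic shape. It therefore suffices to study the singular values of $\tilde{\mathbf{H}}$ as $\epsilon\triangleq\omega\cos\phi\to 0$.

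The plan is to Taylor-expand every entry in $\epsilon$. Write $g_n\triangleq(2n-1-N)/N$. Then
\begin{align*}
\tilde h_{n,l}=e^{j\epsilon g_n\delta_l}=\sum_{i\ge 0}\frac{(j\epsilon)^i}{i!}g_n^i\delta_l^i .
\end{align*}
In matrix form this is
\begin{align*}
\tilde{\mathbf{H}}=\mathbf{V}_G\,\boldsymbol{\Lambda}(\epsilon)\,\mathbf{V}_A^T ,
\end{align*}
where $\mathbf{V}_G=[g_n^{i}]_{n,i}$ and $\mathbf{V}_A=[\delta_l^{i}]_{l,i}$ are (truncated) Vandermonde matrices with $i=0,1,\dots$, and $\boldsymbol{\Lambda}(\epsilon)=\mathrm{diag}\{(j\epsilon)^{i}/i!\}$.

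Next, take QR decompositions $\mathbf{V}_G=\mathbf{Q}_G\mathbf{R}_G$ and $\mathbf{V}_A=\mathbf{Q}_A\mathbf{R}_A$, with $\mathbf{Q}$ having orthonormal columns and $\mathbf{R}$ upper triangular. These are presumably the $\mathbf{R}_G$, $\mathbf{R}_A$ of \eqref{eq74}, i.e.\ Gram--Schmidt orthogonalization of the monomials $1,x,x^2,\dots$ on the node sets. This gives
\begin{align*}
\tilde{\mathbf{H}}=\mathbf{Q}_G\,\big(\mathbf{R}_G\boldsymbol{\Lambda}(\epsilon)\mathbf{R}_A^T\big)\,\mathbf{Q}_A^T,
\end{align*}
so the singular values of $\tilde{\mathbf{H}}$ coincide with those of the core matrix $\mathbf{M}(\epsilon)=\mathbf{R}_G\boldsymbol{\Lambda}(\epsilon)\mathbf{R}_A^T$. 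The infinite tail $i\ge\min(N,L)$ is handled by noting that those monomials lie in the span of the lower ones on the finite node sets. Their contribution therefore only adds higher-order terms in $\epsilon$ to the entries already present, which does not affect the leading orders.

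The key step is the scaling structure of $\mathbf{M}(\epsilon)$. Its $(p,q)$ entry is $\sum_i r^G_{p,i}\,(j\epsilon)^i/i!\,r^A_{q,i}$, and because both $\mathbf{R}$ factors are upper triangular, the sum runs only over $i\ge\max(p,q)$ (indices from $0$). Hence
\begin{align*}
[\mathbf{M}]_{p,q}=O\big(\epsilon^{\max(p,q)}\big),
\end{align*}
with diagonal leading term $r_{G,k}r_{A,k}(j\epsilon)^{k-1}/(k-1)!$ for $p=q=k-1$. Conjugating by $\mathbf{S}=\mathrm{diag}(\epsilon^{i})$ gives a matrix of the form $\mathbf{S}^{1/2}(\cdot)\mathbf{S}^{1/2}$ with graded entries, which is the standard setting for graded matrices. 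I would then show $\sigma_k(\mathbf{M})=|[\mathbf{M}]_{k-1,k-1}|(1+O(\epsilon))$ using determinant ratios of leading principal minors. The product of the top $k$ singular values is governed, to leading order, by the leading $k\times k$ minor, which is $\prod_{i<k} r_{G,i+1}r_{A,i+1}(j\epsilon)^i/i!\,(1+O(\epsilon))$; all other $k\times k$ minors are of strictly higher order in $\epsilon$. Dividing consecutive products then gives $\sigma_k$, and squaring gives \eqref{eq26}.

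The main obstacle is making this last step rigorous: justifying that the leading principal minor dominates every other $k\times k$ minor in $\epsilon$-order, for example via the Cauchy--Binet formula applied to $\mathbf{R}_G\boldsymbol{\Lambda}\mathbf{R}_A^T$. It also requires checking that $r_{G,k}$ and $r_{A,k}$ are nonzero, which holds when the $\delta_l$ are distinct and $k\le\min(N,L)$ (Vandermonde full rank).
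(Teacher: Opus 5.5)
Your proposal is correct and reaches \eqref{eq26} by a genuinely different route from Appendix~\ref{app1}.

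\textbf{The paper's route.} It truncates the expansion to the first Vandermonde block $\tilde{\mathbf{H}}_1$. It fixes the order $2(l-1)$ of each eigenvalue by combining the determinant slope in \eqref{eq56} with majorization by the diagonal of $\hat{\mathbf{G}}$, and transfers this to $\tilde{\mathbf{H}}$ via Weyl's bound. It then argues by contradiction that the singular vectors converge to the columns of $\mathbf{Q}_A$ and $\mathbf{Q}_G$. Finally it reads off the constant from $|\mathbf{q}_{G,l}^T\tilde{\mathbf{H}}\mathbf{q}_{A,l}|^2$, which is exactly the squared $l$th diagonal entry of your core matrix $\mathbf{M}=\mathbf{R}_G\boldsymbol{\Lambda}\mathbf{R}_A^T$.

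\textbf{Your route.} You work with the full infinite QR factorizations of \eqref{eq74} from the start, so no truncation or perturbation step is needed. The tail monomials simply populate the trailing entries of the rows of $\mathbf{R}_A,\mathbf{R}_G$, and absolute convergence follows from the bounded entries of $\mathbf{R}$ and the $1/s!$ factors. You then extract singular values from minors rather than from eigenvectors.

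\textbf{Closing your flagged step.} Write $\epsilon=\omega\cos\phi$ as you do, and let $P=\{p_1<\dots<p_k\}$, $S=\{s_1<\dots<s_k\}$. For upper-triangular $\mathbf{R}$, $\det\mathbf{R}_{P,S}\neq 0$ forces $s_i\ge p_i$ for every $i$. Hence each term of the Cauchy--Binet expansion of $\det\mathbf{M}_{P,Q}$ is $O(\epsilon^{\sum S})$ with $\sum S\ge\max(\sum P,\sum Q)\ge k(k-1)/2$. Equality throughout occurs only for $P=Q=S=\{0,\dots,k-1\}$. Therefore
\begin{equation*}
\det\mathbf{M}_{[k]}=\prod_{i=1}^{k}\frac{r_{G,i}r_{A,i}(j\epsilon)^{i-1}}{(i-1)!}+O\big(\epsilon^{k(k-1)/2+1}\big),
\end{equation*}
while every other $k\times k$ minor is $O(\epsilon^{k(k-1)/2+1})$. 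Combine this with $\sigma_1\cdots\sigma_k=\|C_k(\mathbf{M})\|_2$, where $C_k(\mathbf{M})$ is the $k$th compound matrix of $k\times k$ minors, and with the nonvanishing of the $r_{G,i}r_{A,i}$ that you already checked. This gives $\sigma_1\cdots\sigma_k=|\det\mathbf{M}_{[k]}|(1+O(\epsilon))$. Taking ratios yields $\sigma_k=\frac{|r_{G,k}r_{A,k}|}{(k-1)!}|\epsilon|^{k-1}(1+O(\epsilon))$.

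\textbf{What each approach buys.} Yours gives an explicit relative error and handles $N\neq L$ cleanly; the determinant factorization in \eqref{eq56} tacitly treats the factors as square. It also avoids eigenvector asymptotics. Passing from mere convergence of singular vectors to \eqref{eq73} needs quantitative control of that convergence, since $\|\tilde{\mathbf{H}}\|_2=O(1)$ whereas the target is of order $\epsilon^{l-1}$. The paper's route additionally identifies the limiting singular vectors with the columns of $\mathbf{Q}_A$ and $\mathbf{Q}_G$, which is relevant to the response matrices in \eqref{eq23}. Your argument as written does not deliver that, although it could be extracted from the same graded structure.
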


\begin{IEEEproof}
	Please see Appendix~\ref{app1}.
\end{IEEEproof}

Utilizing the result in~\eqref{eq26}, we rewrite the objective of \text{(P1)} as
\begin{eqnarray}
	\begin{aligned}[b]
		\prod\limits_{k \in {\cal K}} {{\lambda _k}}& \approx \prod\limits_{k \in {\cal K}}{\left[{\frac{{r_{G,k}}{r_{A,k}}}{(k - 1)!}}\right]^2}{(\omega \cos \phi )^{2(k - 1)}}\\&
		=\prod\limits_{k \in {\cal K}}r_{A,k}^2\prod\limits_{k \in {\cal K}}{\left[{{{\frac{{r_{G,k}(\omega \cos \phi )}}{(k - 1)!}}^{k - 1}}}\right]^2}.
	\end{aligned}
	\label{eq27}\end{eqnarray}
As a result, \text{(P1)} is equivalently converted into a more compact form as
\begin{align}
	\text{(P1.1)}: \quad \mathop {\max }\limits_{{\boldsymbol{\delta }}}~\prod\limits_{k \in {\cal K}}{r_{A,k}^2}.
	\label{eq28}\end{align}
To obtain the optimal AAV node deployment, it is necessary to show each $\{r_{A,k}\}$ as an explicit function of $\{\delta_l\}$ via the following theorem.
\begin{theorem} \label{theorem02}
The relationship between ${\{r_{A,k}\}_{k \in {\cal K}}}$ and ${{\{ {\delta_l}\} }_{l \in {\cal L}}}$ can be expressed in a closed form as
\begin{align}
	\left\{ \begin{array}{l}
		{r_{A,1}^{2}} =  L, \\
		{r_{A,k}^{2}} = {\frac{{\sum\limits_{{\cal C}_k} {\prod\limits_{\{i < j\}\in {{\cal C}_k}} {({\delta _j} - {\delta _i})}^2} }}{{\sum\limits_{{\cal C}_{k - 1}} {\prod\limits_{\{i < j\}\in  {{\cal C}_{k-1}}} {({\delta _j} - {\delta _i})}^2}} }}
	\end{array} \right.,\quad\forall k > 1,
	\label{eq29}\end{align}
where ${{\cal C}_k}$ denotes the subsets containing all $k$-combinations of ${\cal L}$, and there are $\binom{L}{k}$ of them.
\end{theorem}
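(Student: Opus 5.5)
We need to guess what R_A is. In Appendix, presumably the Taylor expansion: $\tilde h_{n,l} = \sum_p \frac{(j\omega\cos\phi)^p}{p!} a_n^p \delta_l^p$, where $a_n=(2n-1-N)/N$. So $\tilde{\bf H} \approx {\bf V}_G {\bf \Lambda} {\bf V}_A^T$, with ${\bf V}_A$ the $L\times K$ Vandermonde matrix having entries $[{\bf V}_A]_{l,p}=\delta_l^{p-1}$. A QR decomposition ${\bf V}_A={\bf Q}_A{\bf R}_A$ then makes $r_{A,k}$ the $k$th diagonal entry of ${\bf R}_A$. So the claim amounts to computing the QR diagonal of a Vandermonde matrix.

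The plan is as follows. I take ${\bf R}_A$ in \eqref{eq74} to be the upper-triangular factor of the QR (Gram--Schmidt) factorization of the $L\times K$ Vandermonde matrix ${\bf V}_A=[\mathbf{1},\boldsymbol{\delta},\boldsymbol{\delta}^{2},\dots,\boldsymbol{\delta}^{K-1}]$. This matrix arises from the Maclaurin expansion of \eqref{eq8} in $\omega$, with the diagonal of ${\bf R}_A$ chosen positive.

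First, I use the standard Gram--Schmidt identity. Let ${\bf V}_{A,k}$ denote the first $k$ columns of ${\bf V}_A$. Then ${\bf V}_{A,k}^H{\bf V}_{A,k}={\bf R}_{A,k}^H{\bf R}_{A,k}$, where ${\bf R}_{A,k}$ is the leading $k\times k$ block of ${\bf R}_A$. Hence
\begin{align*}
\det\!\left({\bf V}_{A,k}^H{\bf V}_{A,k}\right)=\prod_{i=1}^{k} r_{A,i}^2 ,
\end{align*}
and therefore $r_{A,k}^2$ equals the ratio of consecutive Gram determinants, $\det({\bf V}_{A,k}^H{\bf V}_{A,k})/\det({\bf V}_{A,k-1}^H{\bf V}_{A,k-1})$. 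For $k=1$ the Gram matrix is $\mathbf{1}^T\mathbf{1}=L$, which gives $r_{A,1}^2=L$ immediately.

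Second, I evaluate each Gram determinant with the Cauchy--Binet formula. Since the $\delta_l$ are real, ${\bf V}_{A,k}^H={\bf V}_{A,k}^T$, and so
\begin{align*}
\det\!\left({\bf V}_{A,k}^T{\bf V}_{A,k}\right)=\sum_{{\cal C}_k}\det\!\left([{\bf V}_{A,k}]_{{\cal C}_k,:}\right)^2 .
\end{align*}
Here the sum runs over all $k$-row subsets ${\cal C}_k\subset{\cal L}$, and $[{\bf V}_{A,k}]_{{\cal C}_k,:}$ is the square Vandermonde matrix built from the nodes $\{\delta_i\}_{i\in{\cal C}_k}$. The classical Vandermonde determinant gives $\prod_{\{i<j\}\in{\cal C}_k}(\delta_j-\delta_i)$. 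Squaring and summing yields the numerator of \eqref{eq29}. The same computation with $k-1$ gives the denominator, which completes the formula. For $k=1$ the empty product equals one, and summing it over the $L$ singletons recovers $L$, consistent with the first line.

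The main obstacle is pinning down exactly what ${\bf R}_A$ is. I have to confirm from the asymptotic derivation of Theorem~\ref{theorem01} that ${\bf R}_A$ is the QR factor of the unscaled monomial Vandermonde matrix. If instead the factors $1/(k-1)!$ or $(\omega\cos\phi)^{k-1}$ had been absorbed into ${\bf R}_A$, the diagonal entries would pick up exactly those scalings, so I would check this normalization against \eqref{eq26}. I would also check that the diagonal signs are taken positive, which is harmless because only $r_{A,k}^2$ appears. Finally, the formula needs $L\ge k$ and distinct $\delta_l$ so that the denominators are nonzero; this is guaranteed by $K\le L$ and by requiring distinct flight nodes.
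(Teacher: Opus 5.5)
Your proposal is correct and follows the paper's proof essentially step for step. Like the paper, you write $\prod_{i\le k} r_{A,i}^2$ as the Gram determinant of the first $k$ columns of the unscaled monomial Vandermonde matrix ${\bf D}_A$ of \eqref{eq53}; your guess about normalization is right, since the factorial and $\omega\cos\phi$ factors stay in ${\bf \Lambda}$. You then take the ratio of consecutive Gram determinants and evaluate each one by Cauchy--Binet and the Vandermonde determinant formula. Truncating ${\bf D}_A$ to its first $K$ columns instead of the paper's $L\times\infty$ matrix is harmless, because the leading diagonal entries of the QR factor depend only on the leading columns.
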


\begin{IEEEproof}
	Please see Appendix~\ref{app2}.
\end{IEEEproof}	
By applying the Theorem~\ref{theorem02}, we recast the objective function in \text{(P1.1)} as
\begin{eqnarray}
	\begin{aligned}[b]
		\prod\limits_{k \in {\cal K}} {r_{A,k}^2} & = L\prod\limits_{k \in {\cal K}\atop k \ne 1} {\frac{{\sum\limits_{{{\cal C}_k}} {\prod\limits_{\{i < j\}\in {{\cal C}_k}} {({\delta _j} - {\delta _i})}^2}}}{{\sum\limits_{{\cal C}_{k - 1}} {\prod\limits_{\{i < j\}\in {{\cal C}_{k-1}}} {({\delta _j} - {\delta _i})}^2}}}} \\&
		= \sum\limits_{{\cal C}_K} {\prod\limits_{\{i < j\} \in {{\cal C}_K}} {{({\delta _j} - {\delta _i})}^2}}.
	\end{aligned}
	\label{eq30}\end{eqnarray}
Then, \text{(P1.1)} is reformulated as
\begin{align}
	\text{(P1.2)}:\quad \mathop {\max }\limits_{{\boldsymbol{\delta}}}~{\mathscr{D}_{{\cal C}_K} ({\boldsymbol{\delta}})},
	\label{eq31}\end{align}
where
\begin{align}
	{\mathscr{D}}_{{\cal C}_K}({\boldsymbol{\delta}})= \sum\limits_{{\cal C}_K} {\prod\limits_{\{i < j\} \in {\cal C}_K} {({\delta _j} - {\delta _i})^2}}. 
	\label{eq32}\end{align}	
It is worth noting that the number of AAV nodes should be no less than that of parallel data streams. Consequently, we analyze two scenarios: Case 1: $K = L$ and Case 2: $K < L$.

If $K = L$, the objective of \text{(P1.2)} reduces to
\begin{align}
	{\mathscr{D}}_{{\cal C}_K}({\boldsymbol{\delta}})={\prod\limits_{\{i < j\}\in{{\cal C}_K}} {({\delta _j} - {\delta _i})}^2}. 
	\label{eq33}\end{align}
It is observed that the function ${\mathscr{D}}_{{\cal C}_K}({\boldsymbol{\delta}})$ reveals the squared determinant of the Vandermonde matrix that involves the AAV node deployment. This kind of Vandermonde determinant maximization problem over the interval $\{\delta_l\}_{l \in {\cal L}} \in [-1,1]$ was initially raised in~\cite{Koeffizienten}. Applying the results in~\cite{Fekete,Variables}, the optimal solution follows Fekete points or Gauss-Lobatto points.

If $K < L$, the following corollary is presented to obtain the optimal solution.

\begin{corollary} \label{corol01}
Let us divide $L$ AAVs into $K$ equal-size groups, and the $k$th group takes the value as
\begin{align}
	\{{\delta _l}\} = \{{\mu ^{(K)} _k}\},\quad {\rm{if}} \ k - 1 < lK/L \le k,
	\label{eq34}\end{align}
where $\{{\mu ^{(K)} _k}\}$ is used to represent a set of Fekete points with the length of $K$.
\end{corollary}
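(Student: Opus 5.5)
The plan is to recognize the objective of (P1.2) as the determinant of an information matrix from polynomial regression, so that the corollary becomes a D-optimal design statement.

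\emph{Step 1 (rewrite the objective).} For each node define $\mathbf{v}(\delta)=[1,\delta,\ldots,\delta^{K-1}]^T$ and stack the rows $\mathbf{v}(\delta_l)^T$ into the $L\times K$ Vandermonde-type matrix $\mathbf{V}$. By the Cauchy--Binet formula, $\det(\mathbf{V}^T\mathbf{V})$ is the sum over all $K$-row subsets of the squared $K\times K$ minors. Each such minor is a Vandermonde determinant $\prod_{i<j}(\delta_j-\delta_i)$ over the chosen subset. Hence
\begin{align*}
\mathscr{D}_{{\cal C}_K}(\boldsymbol{\delta})=\det\Big(\sum_{l\in{\cal L}}\mathbf{v}(\delta_l)\mathbf{v}(\delta_l)^T\Big)=L^K\det \mathbf{M}(\xi_{\boldsymbol\delta}),
\end{align*}
where $\mathbf{M}(\xi)=\int \mathbf{v}(x)\mathbf{v}(x)^T d\xi(x)$ and $\xi_{\boldsymbol\delta}$ is the empirical measure of the nodes on $[-1,1]$. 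So (P1.2) is exactly the exact D-optimal design problem for degree-$(K-1)$ polynomial regression on $[-1,1]$ with $L$ observations.

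\emph{Step 2 (relax to measures).} Since every empirical measure is a probability measure, $\mathscr{D}_{{\cal C}_K}(\boldsymbol{\delta})\le L^K\sup_{\xi}\det\mathbf{M}(\xi)$, with the supremum taken over all probability measures on $[-1,1]$. I then claim that this supremum is attained by $\xi^\star$, the measure with weight $1/K$ at each of the $K$ Fekete points $\{\mu^{(K)}_k\}$. For the Vandermonde problem on $[-1,1]$ these are the Gauss--Lobatto points, i.e., $\pm1$ together with the roots of $P'_{K-1}$, as cited after \eqref{eq33}.

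To verify the claim I would use the Kiefer--Wolfowitz equivalence theorem. It says $\xi^\star$ is D-optimal if and only if $d(x)=\mathbf{v}(x)^T\mathbf{M}(\xi^\star)^{-1}\mathbf{v}(x)\le K$ on $[-1,1]$. For a $K$-point equal-weight design this reduces to $d(x)=K\sum_k \ell_k(x)^2$, where $\ell_k$ are the Lagrange basis polynomials at the support points. The required inequality is therefore $\sum_k\ell_k(x)^2\le 1$ on $[-1,1]$ for Gauss--Lobatto nodes. This is a classical result of Fejér, equivalently the de la Garza/Hoel/Guest characterization of the D-optimal polynomial design. I expect this certification to be the main obstacle. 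If citing the classical result is not acceptable, the fallback is the direct argument: maximizing over weights for fixed support yields equal weights by AM--GM, and the support then reduces to the $K=L$ Vandermonde maximization already solved via \cite{Fekete,Variables}. What that fallback still needs is a separate argument that an optimal design has exactly $K$ support points.

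\emph{Step 3 (attainment by the grouping).} If $K$ divides $L$, the grouping \eqref{eq34} places exactly $L/K$ nodes at each Fekete point, so its empirical measure equals $\xi^\star$ and the bound of Step 2 is achieved. As a sanity check, count directly. Any $K$-subset containing two nodes of the same group has a zero Vandermonde factor. Exactly $(L/K)^K$ subsets pick one node per group, and each contributes the maximal squared Fekete Vandermonde value. The resulting value is $(L/K)^K\prod_{i<j}(\mu^{(K)}_j-\mu^{(K)}_i)^2$, which matches $L^K\det\mathbf{M}(\xi^\star)$ computed by Cauchy--Binet with weights $1/K$. This establishes optimality of \eqref{eq34}. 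When $K$ does not divide $L$, the construction gives near-equal groups and is only asymptotically optimal, and I would note this as a remark.
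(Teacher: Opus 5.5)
Your proof is correct, but it is packaged differently from the paper's, and the two share the same core ingredient.

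\textbf{The paper's route.} The paper works directly with exact node configurations. It expands monomials in the Lagrange basis at the Fekete nodes, $\delta_l^s=\sum_k L_k(\delta_l)\mu_k^s$, and factors $\mathbf{D}_{A,(K)}=\mathbf{\Psi}\mathbf{Z}$, where $\mathbf{Z}$ is the $K\times K$ Fekete Vandermonde matrix. It then applies Hadamard's inequality and AM--GM to $\det(\mathbf{\Psi}^T\mathbf{\Psi})$. This gives $\mathscr{D}_{{\cal C}_K}(\boldsymbol{\delta})\le (L/K)^K\det(\mathbf{Z})^2$ for every $\boldsymbol{\delta}$. Equality holds for the grouping, since each row of $\mathbf{\Psi}$ is then a unit vector and $\mathbf{\Psi}^T\mathbf{\Psi}=\frac{L}{K}\mathbf{I}_K$.

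\textbf{Your route and the common core.} You instead relax to design measures and certify the equal-weight Fekete design with the Kiefer--Wolfowitz theorem. Both arguments hinge on Fej\'er's inequality $\sum_k\ell_k(x)^2\le 1$ on $[-1,1]$ for Fekete nodes. Note that the paper's \eqref{eq86}, printed as $\max_x\sum_k L_k(x)=1$, is only the partition of unity and holds for any nodes. What its chain actually needs is exactly the squared version you isolate. Moreover, the sufficiency half of the equivalence theorem, once unpacked, is essentially the paper's computation. It amounts to AM--GM on the eigenvalues of $\mathbf{M}(\xi^\star)^{-1}\mathbf{M}(\xi_{\boldsymbol\delta})$, whose trace is $\frac{K}{L}\sum_l\sum_k\ell_k(\delta_l)^2\le K$.

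\textbf{What each buys.} Your framing identifies the bound as the approximate-design optimum and ties it to the de la Garza/Hoel/Guest results. It also makes explicit that exact optimality of the grouping requires $K\mid L$. The paper leaves this implicit, although its AM--GM equality case forces it as well. The paper's route is more self-contained: it bounds every exact configuration directly. It therefore needs neither the measure relaxation nor the support-size argument that your fallback was missing.
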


\begin{IEEEproof}
	Please see Appendix~\ref{app3}.
\end{IEEEproof}	

More specifically, all the $L$ AAV nodes are divided into $K$ groups with equal size. The AAV nodes falling each group are arranged in compactly co-located distribution, such as forming a virtual ULA with half-wavelength spacing between nodes, which enhances capacity through array gain, and the centers of these $K$ groups follow the Fekete-point distribution. This deployment strategy based on group-wise can be intuitively interpreted in terms of the lens of spatial division. To attain a spatial gain of $K$, only $K$ distinct eigenmodes are required to support $K$ spatially independent data streams, making the remaining ones redundant. Hence, we can already assure the existence of $K$ distinct eigenmodes via organizing all the AAV nodes into $K$ compact groups. We also investigate the NULA design in a non-asymptotic scenario through numerical optimizations.

\begin{remark} \label{remark04}
The previous derivation shows an interest fact that the maximum eigenvalue product of ${\bf{\tilde G}}$ involves variables unrelated to the time and space, which implies that the asymptotically optimal AAV node deployment remains invariant with respect to flight parameters. This fact contributes to the independent design of the AAV flight node deployment.  
\end{remark}

\subsection{Transmit Precoding Design}
Given the optimal AAV node deployment obtained in \text{(P1)}, the transmit and received array response vector can be determined in closed form using~\eqref{eq23}. Subsequently, let us turn to solving \text{(P2)}. Obviously, the constraint in~\eqref{eq24c} is convex, yet it contains semi-infinite variables owing to the uncertainty. To obtain a more tractable term, we transform the constraint~\eqref{eq24c} into a LMI by applying following lemma.

\begin{lemma} [S-Procedure~\cite{Convex}]\label{lemma01}
Introduce a function ${\mathscr{H}}_i\left( {\bf{x}} \right)$, $i \in \{ 1,2\}$, satisfying
\begin{align}
	{\mathscr{H}}_i\left( {\bf{x}} \right)\buildrel \Delta \over = {{\bf{x}}^H}{{\bf{A}}_i}{\bf{x}} + 2{\mathop{{\mathfrak{Re}}}\nolimits} \left\{ {{\bf{b}}_i^H{\bf{x}}} \right\} + {c_i},
	\label{eq35}\end{align}
where ${{\bf{A}}_i} \in \mathbb{H}^{L}$, ${\bf{b}}_i \in \mathbb{C}^{L \times 1}$, and ${c_i} \in \mathbb{R}$. The implication ${\mathscr{H}}_1\left( {\bf{x}} \right) \le 0 \Rightarrow {\mathscr{H}}_2\left( {\bf{x}} \right)\le 0$ holds if and only if there exists a variable $ \kappa >0$ so as to
\begin{eqnarray}
\begin{aligned}[b]
	\kappa \left[ {\begin{array}{*{20}{c}}
			{{{\bf{A}}_1}}&{{{\bf{b}}_1}}\\
			{{\bf{b}}_1^H}&{{c_1}}
	\end{array}} \right] - \left[ {\begin{array}{*{20}{c}}
			{{{\bf{A}}_2}}&{{{\bf{b}}_2}}\\
			{{\bf{b}}_2^H}&{{c_2}}
	\end{array}} \right]\succeq \boldsymbol{0}.
\end{aligned}
\label{eq36}\end{eqnarray}
Suppose there exists a point ${{\bf{x^*}}}$ yielding ${\mathscr{H}}_i\left( {\bf{x^*}} \right)<0$. 
\end{lemma}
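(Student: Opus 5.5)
The proof has two directions. Sufficiency follows by evaluating the matrix inequality \eqref{eq36} at the lifted vector $[{\bf{x}}^T,1]^T$. Necessity is the substantive part: we homogenize both quadratic functions and then use convexity of the joint numerical range of two Hermitian forms together with a separating-hyperplane argument. The Slater point ${\bf{x}}^*$ is what turns the separating functional into a multiplier $\kappa$.

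\emph{Sufficiency.} Suppose \eqref{eq36} holds for some $\kappa\ge 0$. Write ${\bf{M}}_i$ for the $(L+1)\times(L+1)$ block matrices in \eqref{eq36}, and set ${\bf{z}}=[{\bf{x}}^T,1]^T$. Then ${\bf{z}}^H{\bf{M}}_i{\bf{z}}={\mathscr{H}}_i({\bf{x}})$. Hence \eqref{eq36} gives
\begin{align*}
\kappa{\mathscr{H}}_1({\bf{x}})-{\mathscr{H}}_2({\bf{x}})\ge 0 \quad \text{for all } {\bf{x}}.
\end{align*}
So whenever ${\mathscr{H}}_1({\bf{x}})\le 0$, we get ${\mathscr{H}}_2({\bf{x}})\le\kappa{\mathscr{H}}_1({\bf{x}})\le 0$.

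\emph{Necessity, homogenization.} Define the Hermitian forms $f_i({\bf{z}})={\bf{z}}^H{\bf{M}}_i{\bf{z}}$ on $\mathbb{C}^{L+1}$, with ${\bf{z}}=[{\bf{x}}^T,t]^T$. The first step is to show that the implication $f_1({\bf{z}})\le0\Rightarrow f_2({\bf{z}})\le0$ holds for every ${\bf{z}}$.
\begin{itemize}
\item If $t\neq0$, then $f_i({\bf{z}})=|t|^2{\mathscr{H}}_i({\bf{x}}/t)$, and the implication follows from the hypothesis.
\item If $t=0$, then $f_i({\bf{z}})={\bf{x}}^H{\bf{A}}_i{\bf{x}}$. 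This case uses the Slater point. Take ${\bf{x}}$ with ${\bf{x}}^H{\bf{A}}_1{\bf{x}}\le 0$ and consider the curve ${\bf{x}}^*+s{\bf{x}}$ (with a suitable phase on ${\bf{x}}$). Then ${\mathscr{H}}_1({\bf{x}}^*+s{\bf{x}})$ is a quadratic in $s$ whose leading coefficient is nonpositive and whose value at $s=0$ is strictly negative. By adjusting the phase, or perturbing ${\bf{x}}$ slightly, it stays $\le0$ along an unbounded sequence $s\to\infty$. Dividing ${\mathscr{H}}_2({\bf{x}}^*+s{\bf{x}})\le0$ by $s^2$ and letting $s\to\infty$ gives ${\bf{x}}^H{\bf{A}}_2{\bf{x}}\le0$.
\end{itemize}
I expect this $t=0$ limiting step to be the main technical nuisance, since the linear term must be handled carefully.

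\emph{Necessity, convexity and separation.} Consider the joint range
\begin{align*}
\mathcal{W}=\{(f_1({\bf{z}}),f_2({\bf{z}})):{\bf{z}}\in\mathbb{C}^{L+1}\}.
\end{align*}
It is a convex cone by the Hausdorff--Toeplitz theorem for complex Hermitian forms; Dines' theorem gives the real analogue. By the homogenized implication, $\mathcal{W}$ does not meet the open convex cone $\{(a,b):a<0,\,b>0\}$. A separating hyperplane therefore gives $(\mu,\nu)\neq(0,0)$ with $\mu,\nu\ge0$ and $\mu f_1({\bf{z}})-\nu f_2({\bf{z}})\ge0$ for all ${\bf{z}}$.

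Now use the Slater point. If $\nu=0$, then $\mu>0$ and $f_1\ge0$ everywhere. This contradicts $f_1([{\bf{x}}^{*T},1]^T)={\mathscr{H}}_1({\bf{x}}^*)<0$. Hence $\nu>0$, and we may set $\kappa=\mu/\nu\ge0$. This gives ${\bf{z}}^H(\kappa{\bf{M}}_1-{\bf{M}}_2){\bf{z}}\ge0$ for all ${\bf{z}}$, which is exactly \eqref{eq36}.

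If strict positivity of $\kappa$ is required, note the following. When ${\mathscr{H}}_2({\bf{x}}^*)<0$, $\kappa=0$ would force $-{\mathscr{H}}_2\ge0$ everywhere, a contradiction. In the application to \eqref{eq24c} the constraint is active, so $\kappa>0$ can be taken without loss of generality.
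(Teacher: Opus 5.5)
Your sufficiency argument, the homogenization, the Toeplitz--Hausdorff convexity of $\mathcal{W}$, and the separation step are all sound. In the $t=0$ case, choosing the phase of $\mathbf{x}$ so that $\mathfrak{Re}\{(\mathbf{A}_1\mathbf{x}^*+\mathbf{b}_1)^H\mathbf{x}\}\le 0$ already gives $\mathscr{H}_1(\mathbf{x}^*+s\mathbf{x})<0$ for every $s\ge 0$, so no perturbation is needed. Together these steps prove the S-lemma with $\kappa\ge 0$.

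\textbf{The error is in your last paragraph.} Taking $\kappa=0$ means $-\mathbf{M}_2\succeq\mathbf{0}$, i.e. $\mathscr{H}_2\le 0$ on all of $\mathbb{C}^L$. That is fully compatible with $\mathscr{H}_2(\mathbf{x}^*)<0$, so there is no contradiction. No other argument can supply one, because the ``only if'' direction with strict $\kappa>0$ is false. Take $\mathbf{A}_1=\mathbf{I}_L$, $\mathbf{b}_1=\mathbf{0}$, $c_1=-1$ and $\mathbf{A}_2=-\mathbf{I}_L$, $\mathbf{b}_2=\mathbf{0}$, $c_2=0$.
\begin{itemize}
\item The implication holds, since $\mathscr{H}_2(\mathbf{x})=-\|\mathbf{x}\|_2^2\le 0$ everywhere.
\item Any $\mathbf{x}^*$ with $0<\|\mathbf{x}^*\|_2<1$ makes both $\mathscr{H}_1$ and $\mathscr{H}_2$ strictly negative, so the Slater hypothesis holds under either reading.
\item Yet $\kappa\mathbf{M}_1-\mathbf{M}_2=\mathrm{diag}((\kappa+1)\mathbf{I}_L,\,-\kappa)$ is positive semidefinite only for $\kappa=0$.
\end{itemize}

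\textbf{What is true.} The correct statement is the one you actually proved, with $\kappa\ge 0$. This is how the cited textbook states it, and how the paper itself uses it in \eqref{eq39} with $\kappa_q\ge 0$. Since $\kappa=0$ is admissible if and only if $\mathscr{H}_2\le 0$ globally (the $t=0$ directions follow by density), a strictly positive multiplier is forced exactly when $\mathscr{H}_2$ is positive somewhere.

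For \eqref{eq38}, the right justification is therefore not that ``the constraint is active.'' It is this: $\mathbf{A}_2=\mathbf{U}/\Gamma_E\succeq\mathbf{0}$.
\begin{itemize}
\item If $\mathbf{U}\neq\mathbf{0}$, then $\mathscr{H}_2$ is unbounded above, so every certificate has $\kappa>0$.
\item If $\mathbf{U}=\mathbf{0}$, then any $\kappa\in(0,\sigma_E^2/\epsilon_{E,q}^2]$ works directly.
\end{itemize}

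\textbf{Comparison of routes.} The paper gives no proof of its own; it imports the lemma from the convex-optimization text. There the lemma is stated for real symmetric data with a nonnegative multiplier and derived by duality: an LMI theorem of alternatives followed by a rank-one reduction of the positive semidefinite certificate. Your route, convexity of the joint numerical range followed by separation in $\mathbb{R}^2$, is the classical equivalent; Toeplitz--Hausdorff does the work of the rank-one reduction. Your version is more elementary and works natively over $\mathbb{C}$, which is what the Hermitian data in \eqref{eq37}--\eqref{eq39} actually require.
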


Recall the uncertain sets in \eqref{eq17} as
\begin{align}
	{\mathscr{H}}_1\left(\Delta {{\bf{h}}_{E,q}}\right)=\Delta {{\bf{h}}^H_{E,q}}\Delta {{\bf{h}}_{E,q}} - \epsilon _{E,q}^2 \le 0, \quad \forall q \in \mathcal{Q}.
	\label{eq37}\end{align}
Inserting ${{\bf{h}}_{E,q}} = {{\bf{\tilde h}}_{E,q}} + \Delta {{\bf{h}}_{E,q}}$ into constraint \eqref{eq24c}, we have
\begin{align}
	{\mathscr{H}}_2\left(\Delta {{\bf{h}}_{E,q}}\right)&=\Delta {\bf{h}}_{E,q}^H\left( {\frac{{\bf{U}}}{\Gamma _{E}}} \right)\Delta {{{\bf{h}}}_{E,q}}\notag\\&
	+ 2{\mathop{{\mathfrak{Re}}}\nolimits} \left\{ {{\bf{\tilde h}}_{E,q}^H\left( {\frac{{\bf{U}}}{\Gamma _{E}}} \right)\Delta {{\bf{h}}_{E,q}}} \right\} \notag\\&
	+ {\bf{\tilde h}}_{E,q}^H\left( {\frac{{\bf{U}}}{\Gamma _{E}}} \right){{\bf{\tilde h}}_{E,q}} - \sigma _{E}^2, \quad \forall q \in \mathcal{Q},
	\label{eq38}\end{align}
where ${\bf{U}} \buildrel \Delta \over = {\bf{u}}{{\bf{u}}^H}$. By applying Lemma~\ref{lemma01} and introducing auxiliary variables ${\kappa _q} \ge 0$, ${\forall q \in \mathcal{Q}}$, to make the implication ${\mathscr{H}}_2\left(\Delta {{\bf{h}}_{E,q}}\right) \le 0$, the following LMI constraints should hold
\begin{align}
	\begin{array}{l}
		{{\bf{S}}_q}\left( {{{\bf{U}}},{\kappa _q}} \right)\\
		~	= \left[ {\begin{array}{*{20}{c}}
				{{\kappa _q}{{\bf{I}}_L}}&{\boldsymbol 0}_{L \times 1}\\
				{\boldsymbol 0}^T_{L \times 1}&{ \!-\! {\kappa _q}\epsilon _{E,q}^2\!+\!\sigma _{E}^2}
		\end{array}} \right]
		\!- \!\frac{1}{{{\Gamma _{E}}}}{\bf{\Xi}}_{q}^H {{{\bf{U}}}} {{\bf{\Xi}}_{q}} \succeq \boldsymbol{0}, \quad\forall q \in  \mathcal{Q},
	\end{array}
	\label{eq39}\end{align}
where ${{\bf{\Xi }}_{q}} = [ {{{\bf{I}}_L},{{{\bf{\tilde h}}}_{E,q}}}]$. Now constraint~\eqref{eq24c} involves only a finite number of terms, which is tractable for deriving the optimal solution. Based on this, a standard way to solve \text{(P2)} is to convert it into the following problem
\begin{align}
	\text{(P2.1)}: \quad &\mathop {\min }\limits_{\{{\bf{U}},{\boldsymbol{\kappa}}\}} ~ {\rm Tr}\left({\bf U}\right) \tag{40a} \label{eq40a}\\&
	{\rm{s.t.}}  ~  {\rm Tr}\left({{\bf \Pi} _k}{\bf{U}}\right)  \ge  {\xi _k^2},\quad \forall k \in {\cal K}, \tag{40b} \label{eq40b}\\&
	\kern 17pt	 	{{\bf{S}}_q}\left( {{{\bf{U}}},{\kappa _q}} \right) \succeq \boldsymbol{0}, \quad \forall q \in \mathcal{Q},\tag{40c} \label{eq40c}\\&
	\kern 17pt	{\rm Tr}\left({{\bf E}^{(l)}{\bf U}}\right) \le {P_{\rm max}}, \quad \forall l \in {\cal L}, \tag{40d}	\label{eq40d}\\&
	\kern 17pt  {\rm Rank}\left({\bf U}\right)=1, ~{\bf U} \succeq \boldsymbol{0}, \tag{40e} \label{eq40e}
\end{align}
where ${\boldsymbol{\kappa}} \buildrel \Delta \over = \{{\kappa _q}\}_{q \in {\cal Q}}$, ${\bf \Pi}_{k}={\bf a}_{A,k}^H{\bf a}_{A,k}$ with ${\bf a}_{A,k}$ is the $k$th row of the transmit array response matrix, i.e., ${\bf a}_{A,k}={\left[ {{{\bf{A}}_A}} \right]_{k,:}}$, $\xi _k$ denotes the $k$th element in $ {\boldsymbol{\xi }}$, and ${{\bf{E}}^{(l)}}  = {\bf{e}}_l{\bf{e}}_l^H$ with ${\bf{e}}_l=[\boldsymbol{0}_{(l-1) \times 1}^T,1,\boldsymbol{0}_{(L-l)\times 1}^T]^T\in {\mathbb{R}^{ L  \times 1}}$. Now, ${\rm Rank}\left({\bf U}\right)=1$ remains an obstacle to solving the problem. 

By relaxing constraint~\eqref{eq40e} to obtain a rank-relaxation version of \text{(P2.1)}, the resulting problem reduces to a convex SDP, which can be efficiently solved by SeDuMi~\cite{sturm1999using}.

We should mention that when the optimal solution ${\bf U}^*$ of the relaxed problem admits a rank-one matrix its principal component will be the optimal solution to the optimization problem. However, the matrix ${\bf U}^*$ obtained by solving the SDP in P2.1 will not be rank-one in general due to the relaxation. At this time ${\rm Tr}\left({\bf U}^*\right)$  is a lower bound on the power needed to satisfy the constraints for the dropping of rank constraint. Utilizing randomization~\cite{tseng2003further,zhang2000quadratic}, several studies have explored the generation of good solution of the optimization problem as follows%
		\begin{itemize}
			\item In the first method, we calculate the eigen-decomposition of ${\bf U}^*={\bf{B\Sigma }}{{\bf{B}}^{H}}$ and choose candidate precoding vectors ${\bf u}^*$ such that ${\bf u}^*={\bf{B\Sigma^{1/2} }}{\boldsymbol{\varsigma}}$, where ${\boldsymbol{\varsigma}}$ is uniformly distributed on the unit sphere.	This ensures that $ {{\bf{u}}^*}^H{\bf{u}}^*  = {\rm Tr}({\bf U}^*)$, irrespective of the particular realization of ${\boldsymbol{\varsigma}}$. 
			\item In the second method, inspired by Tseng ~\cite{tseng2003further}, we choose candidate precoding elements ${\left[ {{{\bf{u}}^*}} \right]_i} = \sqrt {{{\left[ {{{\bf{U}}^*}} \right]}_{i,i}}} {e^{j{\varsigma _i}}}$, where $\{\varsigma _i\}$ are independent and uniformly distributed on $[0,2\pi )$. This randomization ensures that  $\left| {[ {{{\bf{u}}^*}} ]_i}\right|^2={{\left[ {{{\bf{U}}^*}} \right]}_{i,i}}$.
			\item The third method, motivated by successful applications in related quadratically constrained quadratic programming problems~\cite{1634819}, uses  ${\bf u}^*={\bf{B\Sigma^{1/2} }}{\boldsymbol{\varsigma}'}$, where ${\boldsymbol{\varsigma}'}$ is a vector of zero-mean, unit-variance complex circularly symmetric uncorrelated Gaussian random variables. This ensures that $\mathbb{E}\{ {\bf{u}}^*{ {\bf{u}}^*}^H\}={\bf U}^*$. 
		\end{itemize}
By using ${\bf U}^*$ to generate a set of candidate precoding vectors, from which the “best” solution will be selected.

\subsection{Received Beamvector Design}
According to the array processing~\cite[Ch. 6.7]{van2002optimum}, the optimal solution of \text{(P3)} is computed as
\begin{eqnarray}
	\setcounter{equation}{41}
	\begin{aligned}[b]
		{{\bf{w}}^*} = {\bf{R}}_y^{ - 1}{{\bf{A}}_G}({\bf{A}}_G^H{\bf{R}}_y^{ - 1}{{\bf{A}}_G})^{-1}{{\bf{1}}_{L \times 1}}.
	\end{aligned}
	\label{eq41}\end{eqnarray}
In practical scenarios, the covariance matrix \(\mathbf{R}_y\) is usually unavailable, and thus the sample covariance matrix is typically employed instead, which is computed as
\begin{eqnarray}
	\begin{aligned}[b]
		{{{\bf{\hat R}}}_y} = \frac{1}{X}\sum\limits_{t \in {\cal X}} {{\bf{y}}(t){{\bf{y}}^H}(t)},
	\end{aligned}
	\label{eq42}\end{eqnarray}
where $X$ is the length of snapshots, and ${\cal X} \buildrel \Delta \over = [1,2,...,X]$.

\section{Some Extensions}\label{sec:Exte}
In prior studies, assumptions like linear arrays and perfect beamforming were used to simplify the model. In what follows, we extend the approach to more practical scenarios, i.e., high-dimensional virtual arrays and robust received beamforming.
\subsection{High-Dimensional Virtual MIMO Systems}
\begin{figure}[!t]
	\centering
	\includegraphics[width=0.75\columnwidth]{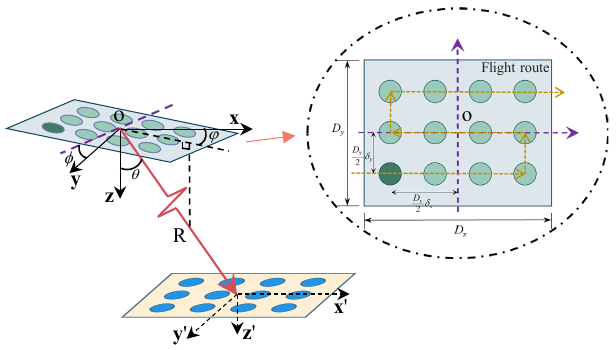}
	\caption{Model of high-dimensional virtual MIMO communications}
	\label{fig_3}
\end{figure}

A multidimensional virtual transmit array is easily constructed by a mobile AAV, which forms a synthetic aperture beamforming with higher orientation flexibility and better power concentration. 

Actually, two-dimensional (2D) planar arrays can be treated each axis as a linear array. Specifically, a virtual non-uniform planar array (NUPA) arranged by a single-antenna mobile AAV, which consists of $L$ flight nodes, i.e., $L = L_x \times L_y$ with aperture sizes $D_x$ and $D_y$. The normalized spacing vectors of the virtual NUPA relative to the center along two axes are denoted by ${{\boldsymbol{\delta }}_x}=\{{\delta }_{x,l_x}\}\in \mathbb{R} ^{L_x \times 1}$ and ${{\boldsymbol{\delta }}_y}=\{{\delta }_{y,l_y}\}\in \mathbb{R} ^{L_y \times 1}$, respectively, $ l_x\in {\cal L}_x \buildrel \Delta \over =[1,2,...,L_x]$ and $ l_y\in {\cal L}_y \buildrel \Delta \over =[1,2,...,L_y]$. The BS is equipped with an $N$-elements uniform planar array (UPA), i.e., $N = N_x \times N_y$. Utilizing 3D geometric coordinates, as depicted in Fig.~\ref{fig_3}, a far-field LoS-MIMO channel is established. The origin is set at the center of the virtual NUPA. The $x$-axis and $y$-axis align with the column-wise and row-wise elements of the ground-based UPA, respectively, while the $z$-axis is oriented towards the ground. The rotation offset $\phi$ occurs around the $z$-axis. Thereafter, the coordinates of each transmit AAV node and received element are determined as
\begin{align}
	\left\{ {\begin{array}{*{20}{l}}
			{{x_{A,l_x,l_y}} = \frac{{{D_x}{\delta _{x,l_x}}\cos \phi  - {D_y}{\delta _{y,l_y}}\sin \phi }}{2}},\\
			{{y_{A,l_x,l_y}} = \frac{{{D_x}{\delta _{x,l_x}}\sin \phi  + {D_y}{\delta _{y,l_y}}\cos \phi }}{2}},\\
			{z_{A,l_x,l_y} = 0},
	\end{array}} \right.
	\label{eq43}\end{align}
and 
\begin{align} 
	\left\{ {\begin{array}{*{20}{l}}
			{{x_{G,{n_x,n_y}}} = \frac{{2{n_x} - 1 - {N_x}}}{2}d_x + R\sin \theta \cos \varphi ,}\\
			{{y_{G,{n_x,n_y}}} = \frac{{2{n_y} - 1 - {N_y}}}{2}d_y + R\sin \theta \sin \varphi ,}\\
			{{z_{G,{n_x,n_y}}} = R\cos \theta},
	\end{array}} \right.
	\label{eq44}\end{align}
where $d_x$ and $d_y$ denotes the inter-element spacing of UPA, respectively, $ n_x\in {\cal N}_x \buildrel \Delta \over =[1,2,...,N_x]$, and $ n_y\in {\cal N}_y \buildrel \Delta \over =[1,2,...,N_y]$. 

In a way similar to the previous section, we can obtain the radio wave propagation range based on the coordinates, and further calculate the equivalent MIMO channel matrix. Following a similar derivation as virtual transmit linear array, we can prove that the optimal high-dimensional AAV node deployment follows the 2D Fekete distributions~\cite{BRIANI20122477}. 

Next, based on the given high-dimensional AAV node deployment, we construct a transmit power minimization problem for secure communications and utilize the LCMV technique for jamming suppression.

\subsection{Robust Adaptive Beamforming Design}
Due to the fact that the AAV inevitably experiences wind-induced body jitters and route deviations, and positioning modules introduce measurement errors caused by its limited accuracy, the transmission link suffers from performance degradation. On account of this, we study a robust adaptive beamforming method to enhance the performance. 

By means of the bounded location error model, the received channel can be expressed as
\begin{align}
&{{\bf{a}}_{G,k}} = {{{\bf{\tilde a}}}_{G,k}} + \Delta {{\bf{a}}_{G,k}},\quad \forall k \in {\cal K},\label{eq45}\\&
		{\cal A}_k = \left\{ {\Delta {{\bf{a}}_{G,k}} \in {\mathbb{C}^{N \times 1}}:\Delta {\bf{a}}_{G,k}^H\Delta {{\bf{a}}_{G,k}} \le \epsilon _{G,k}^2} \right\},\quad\forall k \in {\cal K},\label{eq46}
\end{align}
where ${\bf a}_{G,k} ={\left[ {{{\bf{A}}_G}} \right]_{:,k}}$ is the $k$th column of the received array response matrix, ${{{\bf{\tilde a}}}_{G,k}}$ and $\Delta {{\bf{a}}_{G,k}}$ denote corresponding the estimate channel and the uncertainty, respectively, and ${\cal A}_k $ represents the uncertainty sets. Then the robust adaptive beamforming \text{(P3)} can be rewritten as
\begin{align}
	\text{(P3.1)}: \quad &\mathop {\min }\limits_{\bf{w}} ~ {{\bf{w}}^H}{\bf{R}}_y\bf{w} \tag{47a} \label{eq47a}\\&
	{\rm{s.t.}} ~ \mathop {\min }\limits_{\Delta{\bf{a}}_{G,k}\in {\cal A}_k} ~ |{{\bf{w}}^H}{{\bf{a}}_{G,k}}| \ge {1}, \quad \forall k \in {\cal K}. \tag{47b} \label{eq47b}
\end{align}
Resorting to the worst-case signal-to-jamming-plus-noise ratio (SJNR) maximization method~\cite{10027476}, the \text{(P3.1)} is equivalent to following max-min fairness problem
\begin{align}
	\text{(P3.2)}: \quad & \mathop {\max}_{\bf{w}}  ~  \mathop {\min }\limits_{k\in{\cal K}} ~\mathop {\min }\limits_{\{\Delta{\bf{a}}_{G,k}\in {\cal A}_k\}} ~ \frac{|{{\bf{w}}^H}{{\bf{a}}_{G,k}}|^2}{{{{\bf{w}}^H}{{\bf{R}}_y}{\bf{w}}}}.\tag{48} \label{eq48}
\end{align}
To solve \text{(P3.2)}, the worst-case (smallest) received SJNR over the uncertainty is defined as 
\begin{eqnarray}
	\begin{aligned}[b]
	\setcounter{equation}{49}
	{\eta _k} = \frac{\mathop {\min }\limits_{\{ \Delta {{\bf{a}}_{G,k}} \in {{\cal A}_k}\} }~|{{\bf{w}}^H}{{\bf{a}}_{G,k}}{|^2}}{{{\bf{w}}^H}{{\bf{R}}_y}{\bf{w}}},
	\end{aligned}
\label{eq49}\end{eqnarray}
By introducing an auxiliary variable $\varpi = \mathop {\min }\limits_{k\in{\cal K}}~ \{ {\eta _k}\}\in \mathbb{R}^{+}$, the robust max-min fairness problem \text{(P3.2)} is equivalently expressed as
\begin{align}
	\text{(P3.3)}: \quad &	\mathop {\max }\limits_{\bf{w}} ~\varpi \tag{50a} \label{eq50a}\\&
	 {\rm{s.t.}} ~{\eta _k}  \ge \varpi , \quad\forall k \in {\cal K}.\tag{50b} \label{eq50b}
\end{align}
In the same manner, we replace the non-convex constraint with a tractable convex deterministic constraint by utilizing S-Procedure, thereby facilitating the solutions of the problem.

\section{Simulation Results}\label{sec:Simu}
In this section, simulation results are presented under different parameter settings to evaluate the performance of our proposed AAV-enabled communication scheme. Unless specified otherwise, the system parameters are configured as follows. An AAV is dispatched to provide communications to the ground BS, with coordinate $(-22{\rm ~m}, 127{\rm ~m}, 75{\rm ~m})$, in the presence of $Q=2$ Eves, with coordinates $(-95{\rm ~m}, 88{\rm ~m}, 75{\rm ~m})$ and $(104{\rm ~m}, 78{\rm ~m}, 75{\rm ~m})$. The carrier frequency is $f_c=1 {\rm ~GHz}$. The maximum transmit power of the AAV is $P_{\rm max} = 0 {\rm ~dBm}$. For simplicity, we assume received thermal noise for both BS and Eves are identity, i.e., $10\lg (\sigma_G^2)=10\lg (\sigma_{E,q}^2)=-100$ \text{dBm}~\cite{8078202}. To guarantee secure communications, the maximum tolerable SNR for Eves is set to ${\Gamma _{E}} = 0 {\rm~dB}$. Compared with traditional single-antenna omnidirectional transmission, the proposed synthetic aperture beamforming exhibits directivity and thus provides spatial security, intuitively. For array element deployment and precoding designs, we compared the proposed scheme with the traditional ULA and zero-forcing (ZF) precoding schemes~\cite{9354156}.
\begin{figure}[tb]
	\begin{center}
		\includegraphics[width=1\columnwidth]{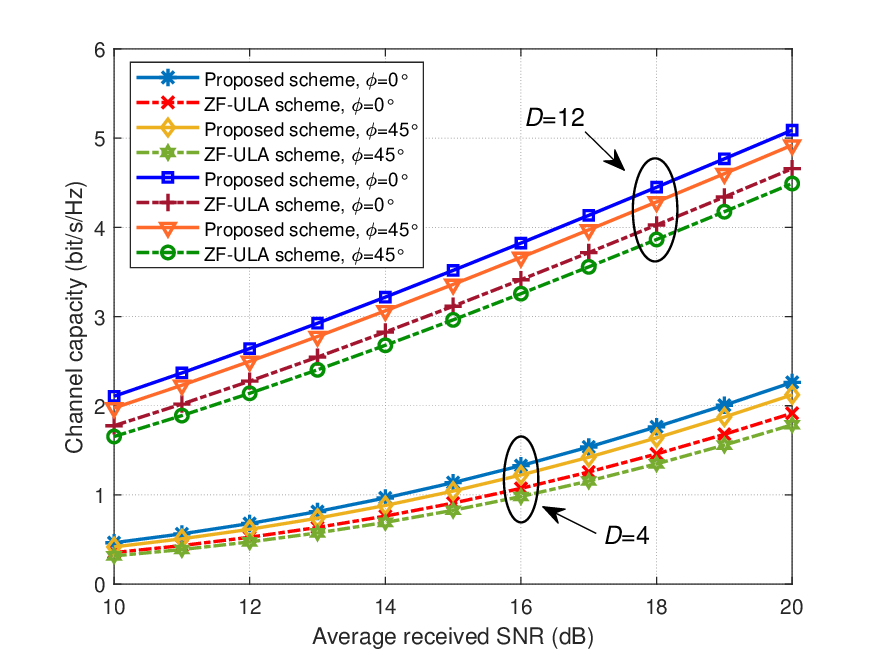}
	\end{center}
	\caption{Channel capacity versus average received SNR for different aperture sizes and rotation offsets.}
	\label{fig4}
\end{figure}

In Fig.~\ref{fig4}, we reveal the system channel capacities of the two approaches versus the transmit power. Typically, in the case of $L=16$ and $N=32$. It is observed that a distinct slope difference between the curves of ZF-ULA scheme and the proposed scheme under varying desired received SNRs, particularly in the high-SNR regimes. This demonstrates the superior channel capacity gains for our proposed scheme. This improvement arises from NULA's ability to maximize the product of the eigenvalues of the channel gain matrix and control the received SNR. In addition, deploying a larger aperture size of the virtual array enhances transmission capabilities. The results in Fig.~\ref{fig4} are consistent with~\eqref{eq26}. That is, the eigenvalues of the channel gain matrix are inversely related to the rotation offsets. Therefore, larger rotation offsets result in a reduction of the realized channel capacity. This is attributed to the fact that the realized channel capacity is corresponding to the projected equivalent array aperture. On the other hand, larger rotation offset leads to more intense dynamic scrambling, and thus reduce the possibility of signals being intercepted, yet lower channel capacity. The results also indicate that a higher channel gain can be achieved using the proposed optimized NULA design, even in the non-asymptotic scenario (i.e., $\omega$ as a finite value).
\begin{figure}[t]
	\begin{centering}
		\subfigure[]{\begin{centering}
				\includegraphics[width=1\columnwidth]{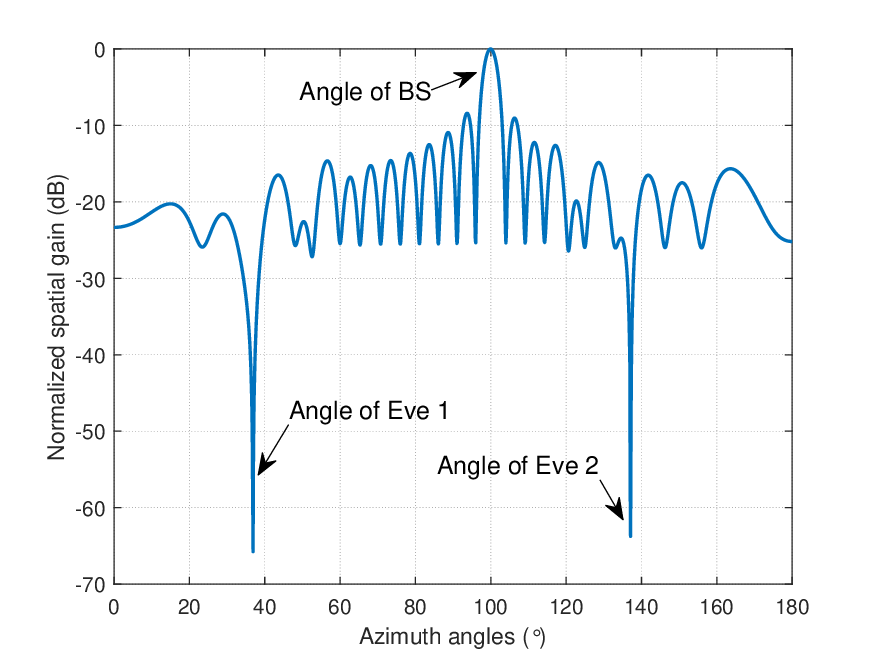}
				\par\end{centering}
		}
		\subfigure[]{\begin{centering}
				\includegraphics[width=1\columnwidth]{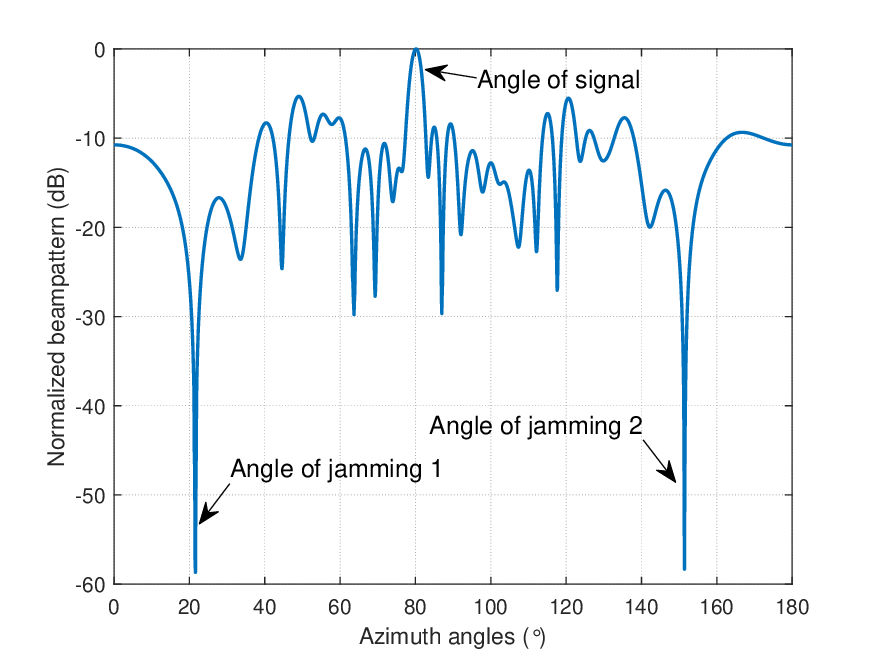}
				\par\end{centering}
		}
		\par\end{centering}
	\caption{Resultant beam responses in the azimuth angle dimension of the proposed method. (a) Transmit spatial gain. (b) Received beampattern.}
	\label{fig5}\end{figure}
	
In the next simulation, based on the coordinates of the BS and the Eves, we depict the typical resultant beam responses of our proposed scheme. We choose $L=24$ and $N=32$. Particularly, we plot the normalized transmit spatial gain in Fig.~\ref{fig5} (a), and the normalized received beampattern in Fig.~\ref{fig5} (b). A general observation from Fig.~\ref{fig5} (a) is that a sharp spatial gain peak is synthesized in the azimuth angle of the BS, which provides QoS assurance of the AAV-to-BS link. Besides, the spatial gains are so poor along other azimuth angles, especially along the azimuth angles of Eves, which effectively prevented eavesdropping. The performance gain is due to our synthetic aperture beamforming design and minimum transmission power criterion. As can be seen from Fig.~\ref{fig5} (b), two deep nulls are formed at the azimuth angles of the jamming, while the response remains zero along the desired angle. These results confirm the effective jamming suppression alongside distortionless response preservation toward the arrival angle of signals.	
\begin{figure}[tb]
	\begin{center}
		\includegraphics[width=1\columnwidth]{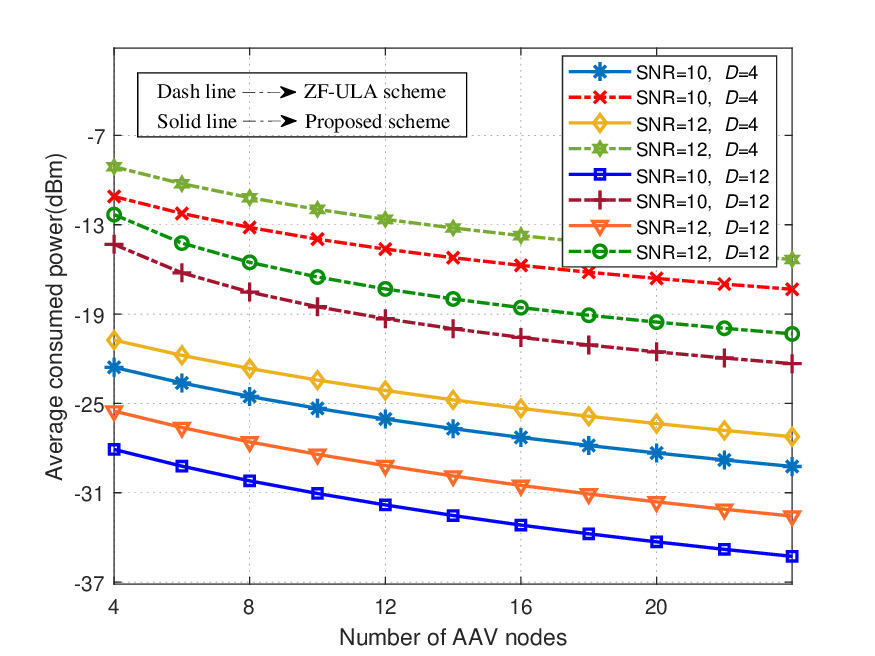}
	\end{center}
	\caption{Average consumed power versus number of AAV nodes for different aperture sizes and average received SNRs.}
	\label{fig6}
\end{figure}

To show the behavior of energy-efficient beamforming design, we list the transmit power consumption with different aperture sizes and average received SNRs in Fig.~\ref{fig6}. It can be observed that the power consumption decreases as more AAV nodes are deployed. This is mainly due to the fact that more AAV nodes are able to enhance array’s capability in array signal processing. In addition, we see that less power consumption for our proposed scheme compared to conventional ZF-ULA design. This is expected since the careful design of the AAV node deployment requires lower received power to meet the desired SNR and the precoding design minimizes the transmit power consumption. Moreover, a larger aperture size of the virtual array and a lower desired received SNR consume less transmit power. It should be mentioned that lower transmission power can reduce information leakage while meeting reception requirements, thereby enhancing security.
\begin{figure}[tb]
	\begin{center}
		\includegraphics[width=1\columnwidth]{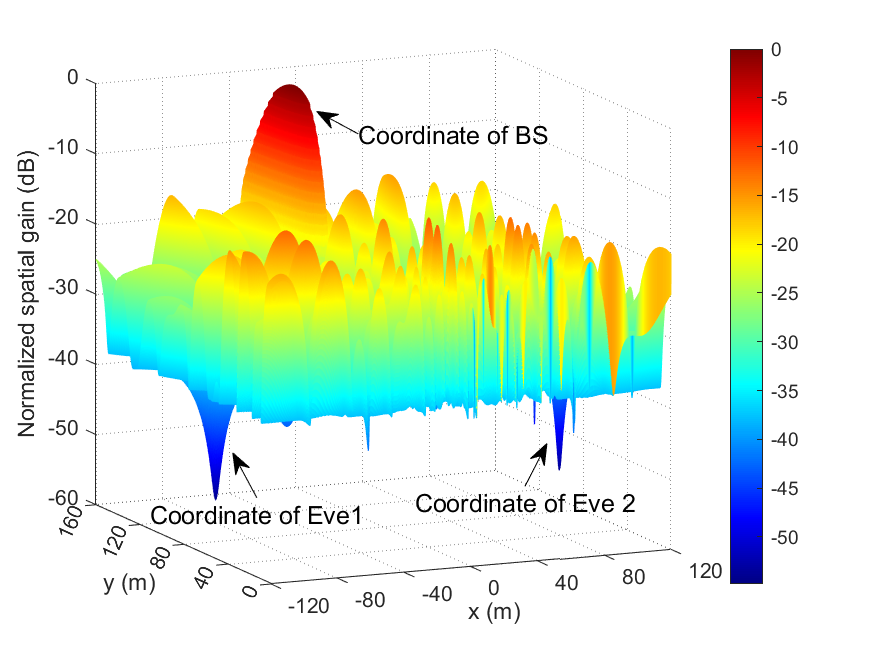}
	\end{center}
	\caption{Transmit spatial gain distribution of the proposed virtual NUPA synthesis method.}
	\label{fig7}
\end{figure}

We further show the radiation pattern of the proposed virtual NUPA synthesis method in Fig.~\ref{fig7}  at $L=6 \times 6$, $N=8 \times 8$, and $D=4 \times 4$. As seen, a sharp spatial gain peak is synthesized around the BS coordinate while maintaining uniformly low spatial gain elsewhere, especially in the coordinates of the Eves. That is, even if Eves can sample and combine correctly, the signal power along the sidelobes remains extremely low, effectively enabling LPD secure transmission. The results confirm our proposed scheme's dual capability to guarantee reliable A2G transmission and effectively preventing interception.
\begin{figure}[tb]
	\begin{center}
		\includegraphics[width=1\columnwidth]{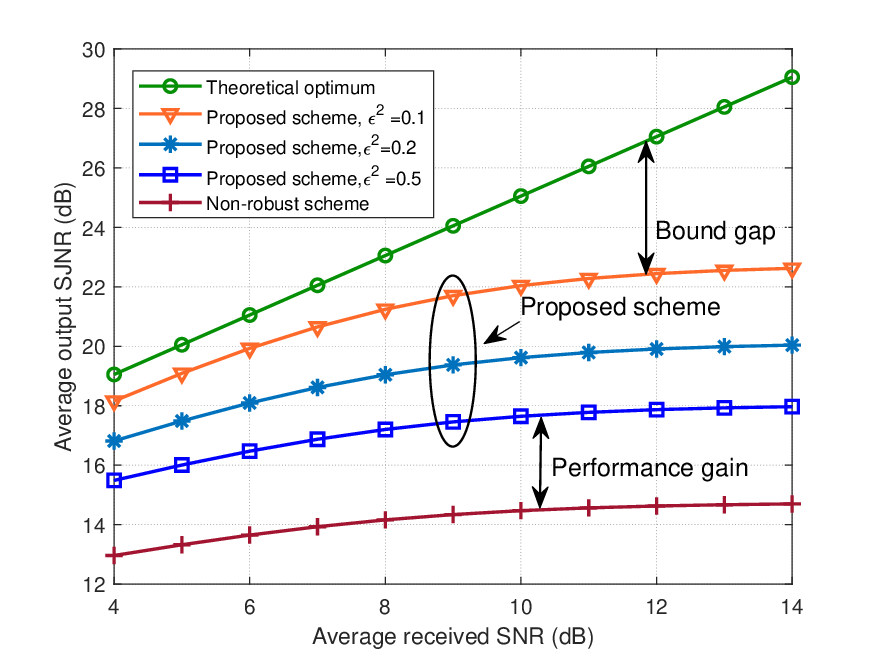}
	\end{center}
	\caption{Average output SJNR versus the received SNR for different channel uncertainty level.}
	\label{fig8}
\end{figure}

In the last scenario, we quantify the average output SJNR versus the received SNR for different channel uncertainty levels in Fig.~\ref{fig8}, where the theoretical upper bound is also presented. As expected, the non-robust method fails to achieve optimal performance under uncertainty, and the proposed robust method maintains a higher average output SJNR than the non-robust approach. Moreover, the average output SJNR exhibits a rapid increase under low received SNR, with a gradual slowdown of rising speed as the received SNR grows.

\section{Conclusion}\label{sec:Conc}
This paper studied an AAV-enabled secure communication problem against hybrid wiretapping, under which conventional approaches may fail to provide satisfactory performance due to load and power limitations of an AAV. To tackle this issue, we investigated a synthetic aperture beamforming design method formed by a single-antenna mobile AAV. Concretely, we integrated the design of virtual array deployment, transmission symbol strategy, transmit precoding, and received beamforming to maximize the system channel capacity. By providing some insights of the considered problem, the optimization problem can be divided into two independent problems. For the AAV node deployment, we asymptotically analyzed the expression of the channel eigenvalues and proved that the optimal virtual NULA deployment follows the Fekete-point distribution. Then, an energy-efficient method for minimizing the transmit power was studied to obtain the optimal precoding, while meeting the QoS for the BS, leakage tolerance of the Eves, and per-node transmit power constraints. Next, the received beamforming was designed via LCMV method to suppress jamming. Moreover, we revealed that the proposed scheme can be extended to high-dimensional arrays and robust adaptive beamforming. Finally, extensive simulations confirmed the superiority of AAV-enabled communications for performance enhancements. The proposed scheme holds broad application prospects in the near future. The virtual antenna array, which is constructed by a single-antenna mobile AAV to achieve synthetic aperture beamforming, represents an emerging technology. This research opens a way for flexible communication, significantly enhancing performance capabilities and coverage range. Potential application scenarios include emergency communications, coverage in remote areas, and military applications. Due to space constraints, this paper leaves several interesting challenges. Notably, strict synchronization, uncertain impacts, and computational complexity are inspiring work for advancing application.

\IEEEpubidadjcol 

\appendices
\section{Proof of Theorem~\ref{theorem01}} \label{app1}
Recalling \eqref{eq8} and performing Taylor expansion, we get
\begin{eqnarray}
	\begin{aligned}[b]
	\setcounter{equation}{51}
		{\tilde h_{n,l}}& = {e^{j\omega  \frac{{(2n - 1 - N)}}{N}{\delta _l}\cos \phi }},\\&
		= \sum\limits_{\upsilon  = 0}^\infty  {\frac{[j\omega {\delta _l}\cos \phi(2n - 1 - N)/N]^\upsilon }{\upsilon!}}.
	\end{aligned}
	\label{eq51}\end{eqnarray}
Based on the derivations in \eqref{eq51}, ${\bf{\tilde H}}$ can be decomposed as
\begin{align}
	{\bf{\tilde H}} = {{\bf{D }}_G}{\bf{\Lambda}}{\bf{D}}_A^T,
	\label{eq52}\end{align}
where ${\bf{\Lambda}}={\rm diag}(\alpha _1,\alpha _2,...)$ is a $\infty$-by-$\infty$ diagonal matrix with ${\alpha_\upsilon} = {(j\omega \cos \phi )^{\upsilon - 1}}/(\upsilon - 1)!, \upsilon \in [1,2,...]$, ${{\bf{D}}_A} \in {{\mathbb R}^{L \times \infty }}$ and ${{\bf{D}}_G} \in {{\mathbb R}^{N \times \infty}}$ are the Vandermonde matrices involved the element deployment, which satisfy
\begin{align}
	{{\bf{D}}_A} = {\left( {\begin{array}{*{20}{c}}
				1&{{\delta _1}}&{\delta _1^2}& \cdots \\
				1&{{\delta _2}}&{\delta _2^2}& \cdots \\
				\vdots & \vdots & \vdots & \ddots \\
				1&{{\delta _L}}&{\delta _L^2}& \ldots 
		\end{array}} \right)_{L \times \infty }},
	\label{eq53}\end{align}
and 
\begin{align}
	{{\bf{D}}_G} = {\left( {\begin{array}{*{20}{c}}
				1&{\frac{{1 - N}}{N}}&{{{(\frac{{1 - N}}{N})}^2}}& \cdots \\
				1&{\frac{{3 - N}}{N}}&{{{(\frac{{3 - N}}{N})}^2}}& \cdots \\
				\vdots & \vdots & \vdots & \ddots \\
				1&{\frac{{N - 1}}{N}}&{{{(\frac{{N - 1}}{N})}^2}}& \ldots 
		\end{array}} \right)_{N \times \infty }}.
	\label{eq54}\end{align}
To facilitate derivation, the matrices ${{\bf{D}}_G}$, ${{\bf{\Lambda}}}$, and ${{\bf{D}}_A}$ are operated with block decomposition as ${{\bf{D}}_G}=({{\bf{D}}_{G,1}},{{\bf{D}}_{G,2}},...)$, ${\bf{\Lambda}}={\rm diag}({\bf{\Lambda}}_{1},{\bf{\Lambda}}_{2},...)$, and ${{\bf{D}}_A}=({{\bf{D}}_{A,1}},{{\bf{D}}_{A,2}},...)$, where ${{\bf{D}}_{G,i}}\in {\mathbb{R}^{N \times N}}$, ${{\bf{\Lambda}}_{i}}\in{\mathbb{C}^{N \times N}}$, and ${{\bf{D}}_{A,i}}\in {\mathbb{R}^{L \times N}}$ are the $i$th sub-matrix of ${{\bf{D}}_G}$, ${{\bf{\Lambda}}}$, and ${{\bf{D}}_A}$, $\forall i$, respectively. Hence, \eqref{eq52} becomes
\begin{align}
	{\bf{\tilde H}} = \sum\limits_{i = 1}^\infty  {{{\bf{\tilde H}}}_i},
	\label{eq55}\end{align}
where ${{\bf{\tilde H}}_i} = {{\bf{D}}_{G,i}}{{\bf{\Lambda}}_i}{\bf{D}}_{A,i}^T$, $\forall i$. The primary sub-matrix ${{\bf{\tilde H}}_1}$ is then selected for approximation analysis. Defining ${{\bf{\tilde G }}_1} \buildrel \Delta \over = {{\bf{\tilde H}}_1^H}{\bf{\tilde H}}_1$ with eigenvalues ${\tilde \lambda _l }$, we derive
\begin{align}
	&\mathop {\lim }\limits_{\omega  \to 0} \frac{{\ln \left( {\prod\limits_{l \in {\cal L}} {\tilde \lambda }_l}  \right)}}{{\ln \left( {\omega \cos \phi } \right)}}\nonumber\\&
	\!	=\! \mathop {\lim }\limits_{\omega  \to 0} \frac{{\ln \left( {\det ({{{\bf{\tilde G }}}_1})} \right)}}{{\ln ({\omega \cos \phi })}}\nonumber\\&
	\!	=\! \mathop {\lim }\limits_{\omega  \to 0} \frac{{\ln \left( {\det ( {{{\bf{D}}_{G,1}}{{\bf{\Lambda}}_1}{\bf{D}}_{A,1}^T{{\bf{D}}_{A,1}}{\bf{\Lambda}}_1^H{\bf{D}}_{G,1}^T})} \right)}}{{\ln \left( {\omega \cos \phi } \right)}}\nonumber\\&
	\!	=\! \mathop {\lim }\limits_{\omega  \to 0} \frac{{\ln \left( {\det ( {{\bf{D}}_{G,1}}{\bf{D}}_{G,1}^T)\det({{\bf{\Lambda}}_1})\det({\bf{D}}_{A,1}^T{{\bf{D}}_{A,1}})\det({\bf{\Lambda}}_1^H)} \right)}}{{\ln \left( {\omega \cos \phi } \right)}}\nonumber\\&
	\!	=\! \mathop {\lim }\limits_{\omega  \to 0} \frac{{\ln \left( {\det ( {{\bf{D}}_{G,1}^T{{\bf{D}}_{G,1}}}   {{\bf{D}}_{A,1}^T{{\bf{D}}_{A,1}}}) \frac{{{{\left( {\omega \cos \phi } \right)}^{L(L\! -\! 1)}}}}{{{{\left[{\prod\limits_{l \in {\cal L}} {\left( {l\! -\! 1} \right)!} }\right]}^2}}}} \right)}}{{\ln \left( {\omega \cos \phi } \right)}}\nonumber\\&
	\!	=\! L(L - 1).
	\label{eq56}\end{align}
Let us employ the QR decomposition (also called the QR factorization) of Vandermonde matrix ${{\bf{D}}_{A,1}}$ as
\begin{align}
	{{\bf{D}}_{A,1}} = {{\bf{Q}}_{A,1}}{{\bf{R}}_{A,1}},
	\label{eq57}\end{align}
where ${{\bf{Q}}_{A,1}}\in {\mathbb{R}^{L \times L}}$ and ${{\bf{R}}_{A,1}}\in {\mathbb{R}^{L \times N}}$ are unitary and upper triangular matrices, respectively. We construct a matrix with the same eigenvalue  ${\tilde {\bf G}_1}$ as 
\begin{align}
	{\bf {\hat G}} \buildrel \Delta \over  = {\bf{Q}}_{A,1}^T{\tilde {\bf G}_1} {{\bf{Q}}_{A,1}}
	= {{\bf{R}}_{A,1}}{{\bf{\Lambda}}_1^H}{\bf{D}}_{G,1}^T{{\bf{D}}_{G,1}}{\bf{\Lambda}}_1{{\bf R}}_{A,1}^T.
	\label{eq58}\end{align}
The $l$th diagonal entry of ${\bf{\hat G}}$, denoted by ${{g}_l}$, leads to~\cite{7546944}
\begin{align}
	\mathop {\lim }\limits_{\omega  \to 0} \frac{{\ln {{g}_l}}}{{\ln \left( {\omega \cos \phi } \right)}} = 2(l - 1), \quad\forall l \in {\cal L}.
	\label{eq59}\end{align}
Since $\omega \to 0$, ${\ln (\omega\cos \phi )}$ tends toward $- \infty $ and thus, the bound slope of ${{\tilde \lambda }_l}$ satisfies
\begin{eqnarray}
	\begin{aligned}[b]
	\mathop {\lim }\limits_{\omega  \to 0} \frac{{\ln \left( {{\tilde \lambda }_l} \right)}}{{\ln \left( {\omega \cos \phi } \right)}}  & \ge \mathop {\lim }\limits_{\omega  \to 0} \frac{{\ln \left( {\sum\limits_{i \in  {\cal L}} {\tilde \lambda }_i} \right)}}{{\ln \left( {\omega \cos \phi } \right)}}\\&
	\overset{{(a)}}\ge \mathop {\lim }\limits_{\omega  \to 0} \frac{{\ln \left( {\sum\limits_{i \in  {\cal L}} {{g_i}} } \right)}}{{\ln \left( {\omega \cos \phi } \right)}}\\&
 	= 2(l - 1),
	\end{aligned}
	\label{eq60}\end{eqnarray}
where the inequality $(a)$ follows the Hermitian matrix majorization relations~\cite[Co. 4.3.34.]{Matrix}. Thus, it can be further obtained as 
\begin{eqnarray}
	\begin{aligned}[b]
	\mathop {\lim }\limits_{\omega  \to 0} \frac{{\ln \left( {\prod\limits_{l \in {\cal L}} {\tilde \lambda }_l} \right)}}{\ln \left( {\omega \cos \phi } \right)} &= \mathop {\lim }\limits_{\omega  \to 0} \sum\limits_{l \in {\cal L}} {\frac{\ln \left( {{\tilde \lambda }_l} \right)}{\ln \left( {\omega \cos \phi } \right)}}\\&  
	\ge \sum\limits_{l \in {\cal L}} {2(l - 1)}\\&
	 = L(L - 1).
	\end{aligned}
	\label{eq61}\end{eqnarray}
Recalling to~\eqref{eq56}, we get that the equality in~\eqref{eq60} and~\eqref{eq61} hold, i.e.,
\begin{align}
	\mathop {\lim }\limits_{\omega  \to 0} \frac{{\ln \left( {{\tilde \lambda }_l} \right)}}{\ln \left( {\omega \cos \phi } \right)} = 2(l - 1), \quad \forall l \in {\cal L}.
	\label{eq62}\end{align}

Actually, ${\bf{\tilde H}}$ can be perceived as a perturbation of ${{{\bf{\tilde H}}}_1}$ (cf. Eq.~\eqref{eq55}). Applying the Weyl’s Perturbation Theorem~\cite{Weyl,Perturbation}, yields
\begin{align}
	| {\sqrt {\lambda _l}  - \sqrt {\tilde \lambda }_l} | \le \| {\sum\limits_{i = 2}^\infty  {{\bf{\tilde H}}}_i} \|_2.
	\label{eq63}\end{align}
Inserting ${{\bf{\tilde H}}_i} = {{\bf{D}}_{G,i}}{{\bf{\Lambda}}_i}{\bf{D}}_{A,i}^T$ into the right-side of~\eqref{eq63}, we have
\begin{eqnarray}
	\begin{aligned}[b]
		&\mathop {\lim }\limits_{\omega  \to 0} \frac{\| {\sum\limits_{i = 2}^\infty  {{\bf{\tilde H}}}_i} \|_2}{{( \omega \cos \phi )}^{L - 1}}\\&
		\overset{{(b)}}\le \mathop {\lim }\limits_{\omega  \to 0} \frac{\sum\limits_{i = 2}^\infty  {\| {\bf{D}}_{G,i} \|_2}  \cdot \| {{\bf{\Lambda}}_i} \|_2 \cdot \| {{\bf{D}}_{A,i}^T} \|_2}{{( \omega \cos \phi )}^{L - 1}}\\&
		\overset{{(c)}}= \mathop {\lim }\limits_{\omega  \to 0} \frac{\sum\limits_{i = 2}^\infty  {\| {\bf{D}}_{G,i} \|_2}  \cdot \frac{{{( \omega \cos \phi )} ^{L(l - 1)}}}{[ {L(l - 1)}]!} \cdot \| {{\bf{D}}_{A,i}^T} \|_2}{{( \omega \cos \phi )}^{L - 1}}= 0,
	\end{aligned}
	\label{eq64}\end{eqnarray}
where the inequality $(b)$ is obtained due to the Triangle inequality, and the equality $(c)$ holds based on the fact that spectral norm of a matrix is its largest singular value~\cite[Th. 5.6]{Matrix}. Utilizing~\eqref{eq63} and~\eqref{eq64}, we obtain
\begin{eqnarray}
	\begin{aligned}[b]
		&\mathop {\lim }\limits_{\omega  \to 0} \left| {\sqrt {\frac{{{\lambda _l}}}{{{{(\omega \cos \phi )}^{2(l - 1)}}}}}  - \sqrt {\frac{{{{\tilde \lambda }_l}}}{{{{(\omega \cos \phi )}^{2(l - 1)}}}}} } \right|\\&
		\le \mathop {\lim }\limits_{\omega  \to 0} \frac{{\| {\sum\limits_{i = 2}^\infty  {{\bf{\tilde H}}}_i} \|_2}}{{\left( {\omega \cos \phi } \right)}^{L - 1}}= 0,
	\end{aligned}
	\label{eq65}\end{eqnarray}
which shows that ${\lambda _l}$ equals to ${{\tilde \lambda }_l}$ with the high-order infinitesimals ${(\omega \cos \phi)}^{2(l - 1)}$. In consequence, we have
\begin{align}
	\mathop {\lim }\limits_{\omega  \to 0} \frac{{\ln ( \lambda _l )}}{\ln ( \omega \cos \phi )} = \mathop {\lim }\limits_{\omega  \to 0} \frac{{\ln ( {{{\tilde \lambda }_l}} )}}{\ln (\omega \cos \phi )} = 2(l - 1), \quad\forall l \in {\cal L}.
	\label{eq66}\end{align}
	
Denote by $\{{\boldsymbol{\chi }}_l\}_{l \in {\cal L}} $ the eigenvectors of  ${\bf{\tilde G }}$ corresponding to eigenvalues $\{{\lambda _l}\}_{l \in {\cal L}}$. Let us define
\begin{align}
{ \boldsymbol{\bar \chi} _l} \buildrel \Delta \over = \mathop {\lim }\limits_{\omega  \to 0} {{\boldsymbol \chi} _l}, \quad\forall l \in {\cal L},
	\label{eq67}\end{align}
which is a basis of vector to span $L$ dimensional space. That is, each column of ${{\bf{Q}}_{A,1}}$ is a linear combination of vectors in a vector space $\{{ \boldsymbol{\bar \chi} _l}\}$ over a field $\{\beta _{l,i}\}$, i.e.,
\begin{align}
	{{\bf{q}}_{A,l}} = \sum\limits_{i \in {\cal L}} {{\beta _{l,i}}{{\boldsymbol{\bar \chi }}_i}},\quad \forall l \in {\cal L},
	\label{eq68}\end{align}
where the linear coefficients satisfy $\sum\nolimits_{i \in {\cal L}} {\beta _{l,i}^2}  = 1,\forall l$.

Then, we prove that the eigenvector $\{{ \boldsymbol{\bar \chi} _l}\}$ of matrix ${\bf{\tilde G}}$ corresponding to $\{{\lambda _l}\}$ converges to the $l$th column of ${{\bf{Q}}_{A,1}}$ by contradiction methods. Starting from $l=L$, we suppose ${{\bf{q}}_{A,L}} \ne {{\boldsymbol{\bar \chi }}_L}$, as $\omega  \to 0$. Then when ${\beta _{L,L}} \ne 1$ there exists an index ${i^{*}}({i^*} < L)$ such that ${\beta _{L,{i^*}}}\ne0$, and we have
\begin{eqnarray}
	\begin{aligned}[b]
		\mathop {\lim }\limits_{\omega  \to 0} {g _L}& = \mathop {\lim }\limits_{\omega  \to 0} {\bf{q}}_{A,L}^T{\bf{\tilde G }}{{\bf{q}}_{A,L}}\\&
		= \mathop {\lim }\limits_{\omega  \to 0} {\left( {\sum\limits_{i \in {\cal L}} {{\beta _{L,i}}{{\boldsymbol{\bar \chi }}_i}} } \right)^T}{\bf{\tilde G }}\left( {\sum\limits_{i \in {\cal L}} {{\beta _{L,i}}{{\boldsymbol{\bar \chi }}_i}} } \right)\\&
		= \mathop {\lim }\limits_{\omega  \to 0} \sum\limits_{i \in {\cal L}} {\beta _{L,i}^2} {\boldsymbol{\bar \chi }}_i^T{\bf{\tilde G }}{{\boldsymbol{\bar \chi }}_i}\\&
		= \mathop {\lim }\limits_{\omega  \to 0} \sum\limits_{i \in {\cal L}} {\beta _{L,i}^2} {\lambda _i} \ge \beta _{L,{i^*}}^2{\lambda _{{i^*}}}.
	\end{aligned}
	\label{eq69}\end{eqnarray}
Combining this and the result in~\eqref{eq66}, we have
\begin{align}
	\mathop {\lim }\limits_{\omega  \to 0} \frac{\ln ({g _L})}{\ln (\omega \cos \phi )} \!\le \!\mathop {\lim }\limits_{\omega  \to 0} \frac{\ln (\beta _{M,{i^*}}^2{\lambda _{i^*}})}{\ln (\omega\cos \phi )}\! =\! 2({i^*} \!-\! 1) < 2(L \!-\! 1).
	\label{eq70}\end{align}
Explicitly,~\eqref{eq70} contradicts with the result in~\eqref{eq59}, indicating the prior assumption is invalid. Now we can get ${{\bf{q}}_{A,L}} = {{\boldsymbol{\bar \chi }}_L}$.

Then, we recursively deduce to $l=L-1$. Based on ${{\boldsymbol{q}}_{A,L}} = {{\boldsymbol{{\bar \chi } }}_L}$ and ${{\boldsymbol{{\bar \chi } }}_L} \bot {{\boldsymbol{{\bar \chi } }}_{L - 1}}$, we claim ${\beta _{M-1,M}} = 0$. Similarly, assuming ${{\bf{q}}_{A,L-1}} \ne {{\boldsymbol{\bar \chi }}_{L-1}}$, we get
\begin{align}
	\mathop {\lim }\limits_{\omega  \to 0} \frac{\ln ({\lambda_{L-1}})}{\ln (\omega\cos \phi )} < 2(L - 2),
	\label{eq71}\end{align}
which is again inconsistent with the result in~\eqref{eq59}. By utilizing recursive argument, we can deduce
\begin{align}
	{{\bf{q}}_{A,l}} = {{\boldsymbol{\bar \chi }}_{l}}, \quad \forall l \in {\cal L}.
	\label{eq72}\end{align}

Above derivations prove that the right singular vectors of ${\bf{\tilde H}}$ converge to the $\{{\bf{q}}_{A,l}\}_{l \in {\cal L}}$ as ${\omega  \to 0}$. Likewise, the left singular vectors of ${\bf{\tilde H}}$ converge to the $\{{\bf{q}}_{G,l}\}_{l \in {\cal L}}$, which is the $l$th column of ${{\bf{Q}}_{G,1}}$. Then we have 
\begin{eqnarray}
	\begin{aligned}[b]
		\mathop {\lim }\limits_{\omega  \to 0} \frac{{\lambda _l}}{{(\omega \cos \phi )}^{2(l - 1)}}
		= \mathop {\lim }\limits_{\omega  \to 0} \frac{{{| {{\bf{q}}_{G,l}^T{\bf{\tilde H}}{{\bf{q}}_{A,l}}} |}^2}}{{(\omega \cos \phi )}^{2(l - 1)}}.
	\end{aligned}
	\label{eq73}\end{eqnarray}
Next, ${{\bf{D}}_G}$ and ${{\bf{D}}_A}$ is implemented QR decomposition as
\begin{align}
	\left\{{\begin{array}{*{20}{c}}
			{{{\bf{D}}_G} = {{\bf{Q}}_G}{{\bf{R}}_G}},\\
			{{{\bf{D}}_A} = {{\bf{Q}}_A}{{\bf{R}}_A}},
	\end{array}}\right.
	\label{eq74}\end{align}
where ${\bf{Q}}_A \in {\mathbb{R}^{L \times L}}$ and ${\bf{Q}}_G \in {\mathbb{R}^{N \times N}}$ are the unitary matrices, ${\bf{R}}_A \in {\mathbb{R}^{L \times \infty }}$ and ${\bf{R}}_G \in {\mathbb{R}^{N \times \infty}}$ are the upper triangular matrices. Let $r_{A,(l,n)}$ and $r_{G,(l,n)}$ indicate the $(l,n)$th entry in ${{\bf{R}}_A}$ and ${{\bf{R}}_G}$, respectively. Thus, by combining~\eqref{eq52}, ~\eqref{eq73} and~\eqref{eq74}, yields
\begin{eqnarray}
	\begin{aligned}[b]
		&\mathop {\lim }\limits_{\omega  \to 0} \frac{{\lambda _l}}{{(\omega \cos \phi )}^{2(l - 1)}}\\&
		= \mathop {\lim }\limits_{\omega  \to 0} \frac{{{| {{\bf{q}}_{G,l}^T{{\bf{D}}_G}{\bf{\Lambda}\bf{D}}_A^T{{\bf{q}}_{A,l}}} |}^2}}{{(\omega \cos \phi )}^{2(l - 1)}}.\\&
		= \mathop {\lim }\limits_{\omega  \to 0} {\left| {\frac{{{\bf{q}}_{G,l}^T{{\bf{Q}}_G}{{\bf{R}}_G}{\bf{\Lambda}}{\bf{R}}_A^T{\bf{Q}}_A^T{{\bf{q}}_{A,l}}}}{{(\omega \cos \phi )}^{l - 1}}} \right|^2}\\&
		= \mathop {\lim }\limits_{\omega  \to 0} {\left| {\frac{{\sum\limits_{n = l}^\infty  {{r_{G,(l,n)}}\frac{{(j\omega \cos \phi )}^{n - 1}}{{(n - 1)!}}{r_{A,(l,n)}}}}}{{(\omega \cos \phi )}^{l - 1}}} \right|^2}\\&
		= {\left[{\frac{{r_{G,l}}{r_{A,l}}}{(l - 1)!}} \right]^2},
	\end{aligned}
	\label{eq75}\end{eqnarray}
where ${r_{G,l}} = {r_{G,(l,l)}}$ and ${r_{A,l}} = {r_{A,(l,l)}}$ denote the diagonal entities, respectively, and the proof is complete.

\section{Proof of Theorem~\ref{theorem02}} \label{app2}
Based on the characteristic of QR decomposition, we know that ${r_{A,1}}$ is an all-one vector with the length $L$ as $k=1$. Following, we focus on the calculation of $k>1$. Use ${{\bf{D}}_{A,(k)}}$ to indicate the first $k$ columns of the Vandermonde matrix ${{\bf{D}}_{A}}$ in~\eqref{eq53}. Then combining~\eqref{eq74}, yields
\begin{eqnarray}
	\begin{aligned}[b]
		\det ({\bf{D}}_{A,(k)}^T{{\bf{D}}_{A,(k)}})& = \det ({\bf R}_{A,(k)}^T{\bf Q}_A^T{{\bf Q}_A}{{\bf R}_{A,(k)}})\\&
		= \det ({\bf R}_{A,(k)}^T{{\bf R}_{A,(k)}}) \\&
		\overset{{(c)}} = \prod\limits_{i = 1}^k {r_{A,i}^2},
	\end{aligned}
	\label{eq76}\end{eqnarray}
where the equality $(c)$ sets up by applying the property that determinant of triangular matrices equals the product of diagonal elements. Thus we get
\begin{align}
	r_{A,k}^2 = \frac{{\prod\limits_{i = 1}^k {r_{A,i}^2} }}{{\prod\limits_{i = 1}^{k - 1} {r_{A,i}^2} }} = \frac{{\det ({\bf{D}}_{A,(k)}^T{{\bf{D}}_{A,(k)}})}}{{\det ({\bf{D}}_{A,(k - 1)}^T{{\bf{D}}_{A,(k - 1)}})}}.
	\label{eq77}\end{align}
Following the Cauchy-Binet formula~\cite{Cauchy}, we obtain
\begin{align}
	\det ({\bf{D}}_{A,(K)}^T{{\bf{D}}_{A,(K)}}) = \sum\limits_{{\cal C}_K} {\det ({\bf{D}}_{A,[{\cal C}_K]}^T)\det ({{\bf{D}}_{A,[{\cal C}_K]}})},
	\label{eq78}\end{align}
where ${\bf{D}}_{A,[{\cal C}_K]}$ denotes a $K$-by-$K$ sub-matrix whose rows are the rows of ${\bf{D}}_{A,(K)}$ at indices belong to set ${{\cal C}_K}$. Since ${\bf{D}}_{A,[{\cal C}_K]}$ is a Vandermonde matrix, its determinant can be calculated as
\begin{align}
	\det ({{\bf{D}}_{A,[{{\cal C}_K}]}}) = \prod\limits_{\{i < j\}\in{{\cal C}_K}} {({\delta _j} - {\delta _i})}.
	\label{eq79}\end{align}
Therefore, by substituting~\eqref{eq79} into~\eqref{eq78} and~\eqref{eq77}, which completes the proof.

\section{Proof of Corollary~\ref{corol01}} \label{app3}
We introduce an associated fundamental (or cardinal) Lagrange interpolating polynomial related to Fekete points as
\begin{align}
	{{L}}_k(x) = \prod\limits_{i \ne k} {\frac{{x - {\mu _i}}}{{\mu _k} - {\mu _i}}}.
	\label{eq80}\end{align}
Any polynomial function whose degree less than $K$ can be written as
\begin{align}
	{\mathscr{L}}(x) = \sum\limits_{k \in {\cal K}} {{{L}}_k(x){\mathscr{L}}({\mu _k})}.
	\label{eq81}\end{align}
Thus, each entry in ${\bf{D}}_{A,(K)}$ can be expressed as
\begin{align}
	{\delta _l^{s}} = \sum\limits_{k \in {\cal K}} {{{{L}}_k}({\delta _l}){\mu _k^{s}}}.  
	\label{eq82}\end{align}
Next, we can write ${\bf{D}}_{A,(K)}$ in a more compact form as
\begin{align}
	{{\bf{D}}_{A,(K)}} = {\bf{\Psi}}{\bf{Z}},
	\label{eq83}\end{align}  
where ${\bf{\Psi}}=\{{{{L}}_k}({\delta _l})\}_{l \in {\cal L},k \in {\cal K}}$, and ${\bf{Z}}$ is a $K$-dimensional Vandermonde matrix determined by Fekete points ${\boldsymbol{\mu}} = ({\mu _1},{\mu _2},...,{\mu _K})$. In that way, \text{(P1.2)} is rewritten as a matrix form, i.e.,
\begin{align} 
	{\mathscr{D}}_{{\cal C}_K}({\boldsymbol{\delta}})=\sum\limits_{{\cal C}_K} {\prod\limits_{\{i < j\}\in {{\cal C}_K}} {({\delta _j} - {\delta _i})}^2}  = \det ({\bf{D}}_{A,(K)}^T{\bf{D}}_{A,(K)}).
	\label{eq84}\end{align}
Then, we substitute~\eqref{eq83} into~\eqref{eq84} , which leads to
\begin{eqnarray}
	\begin{aligned}[b]
		&\det ({\bf{D}}_{A,(K)}^T{{\bf{D}}_{A,(K)}})\\&
    	= \det \left( {{{({\bf{\Psi}}{\bf{Z}})}^T}({\bf{\Psi}}{\bf{Z}})} \right)\\&
		= \det \left( {{\bf{Z}}^T{\bf{\Psi}}^T}{\bf{\Psi}}{\bf{Z}} \right)\\&
		\overset{(d)} = \det ({\bf{Z}}^T)\det ({{\bf{\Psi}}^T}{\bf{\Psi}})\det ({\bf{Z}})\\&
    	= {\mathscr{D}}_{{\cal C}_K}^2({\boldsymbol{\mu }})\det ({{\bf{\Psi}}^T}{\bf{\Psi}})\\&
		\overset{(e)}\le {\mathscr{D}}_{{\cal C}_K}^2({\boldsymbol{\mu }}){\prod\limits_{k \in {\cal K}} {\left( {\sum\limits_{l \in {\cal L}} {L_k^2({\delta _l})} } \right)}}\\&
		\overset{(f)}\le {{\mathscr{D}}_{{\cal C}_K}^2}({\boldsymbol{\mu }}){\left( {\frac{1}{K}\sum\limits_{k \in {\cal K}} {\sum\limits_{l \in {\cal L}} {L_k^2({\delta _l})} } } \right)^K},
	\end{aligned}
	\label{eq85}\end{eqnarray}
where the equality $(d)$ holds since the matrices ${\bf{\Psi}}$ and ${\bf{Z}}$ are both invertible, the inequality $(e)$ is satisfied in terms of the Hadamard inequality~\cite{Hadamard}, and inequality $(f)$ can be established according to the arithmetic mean-geometric mean inequality~\cite{inequality}. 

Recall the stronger inequality of polynomial, i.e.,
\begin{align}
	\mathop {\max }\limits_{x \in [ - 1,1]} \sum\limits_{k \in {\cal K}} {{L_k}(x)}  = 1.
	\label{eq86}\end{align}
The optimality can be achieved when we set the variables as Fekete-Gauss-Lobatto points~\cite[LM. 2.1.]{Fekete}. Substituting~\eqref{eq86} into~\eqref{eq85}, yields the upper bound as
\begin{align}
	{\mathscr{D}}_{{\cal C}_K}({\boldsymbol{\delta}}) \le {{\mathscr{D}}_{{\cal C}_K}^2}({\boldsymbol{\mu }}){\left( {\frac{L}{K}} \right)^K}.
	\label{eq87}\end{align}
This completes the proof.

\bibliographystyle{IEEEtran}
\bibliography{IEEEabrv,Reference}

\vfill

\end{document}